\documentclass{article}

\usepackage[preprint]{neurips_2026}

\usepackage[utf8]{inputenc} 
\usepackage[T1]{fontenc}    
\usepackage{hyperref}       
\usepackage{url}            
\usepackage{booktabs}       
\usepackage{amsfonts}       
\usepackage{nicefrac}       
\usepackage{microtype}      
\usepackage{xcolor}         

\usepackage{subcaption}
\usepackage{algorithmic}
\usepackage[ruled]{algorithm2e} 

\usepackage{amsmath}
\usepackage{amssymb}
\usepackage{mathtools}
\usepackage{amsthm}

\usepackage{placeins} 

\usepackage[capitalize,noabbrev]{cleveref}

\theoremstyle{plain}
\newtheorem{theorem}{Theorem}[section]

\newtheorem{lemma}[theorem]{Lemma}

\theoremstyle{definition}

\newtheorem{assumption}[theorem]{Assumption}
\theoremstyle{remark}
\newtheorem{remark}[theorem]{Remark}

\usepackage[textsize=tiny]{todonotes}

\usepackage{subcaption}

\usepackage{color}
\definecolor{darkred}{RGB}{100,0,0}
\definecolor{darkgreen}{RGB}{0,100,0}
\definecolor{darkblue}{RGB}{0,0,150}

\usepackage{mathtools}

\usepackage{bbm} 
\newcommand{\ind}{\mathbbm{1}}

\newcommand{\PP}{\mathbb{P}}
\newcommand{\E}{\mathbb{E}}

\newcommand{\Ac}{\mathcal{A}}
\newcommand{\Dc}{\mathcal{D}}

\newcommand{\Ic}{\mathcal{I}}

\newcommand{\Cc}{{\mathcal{C}}}

\newcommand{\Dctr}{\mathcal{D}_{\text{tr}}}
\newcommand{\Dccal}{\mathcal{D}_{\text{cal}}}

\newcommand{\Iccal}{\mathcal{I}_{\text{cal}}}
\newcommand{\sttau}{s_\tau}

\newcommand{\consta}{K}

\newcommand{\ttau}{\tau} 

\newcommand{\AIPCW}{\text{AIPCW}}
\newcommand{\DR}{\text{DR}}

\newcommand{\bigCI}{\mathrel{\text{\scalebox{1.07}{$\perp\mkern-10mu\perp$}}}}

\usepackage{soul}

\title{History-Aware Conformal Prediction Sets for Censored Time-to-Event Outcomes}

\author{%
Yuyao Wang\thanks{Department of Public Health and Health Sciences, Northeastern University, Boston, MA 02115} 
  \And
  Alexander W. Levis\thanks{Department of Biostatistics, Epidemiology, and Informatics, University of Pennsylvania, Philadelphia, PA 19104}
  \And
  Shu Yang\thanks{Department of Statistics, North Carolina State University, Raleigh, NC 27695}
  \And
  Larry Han\thanks{Department of Public Health and Health Sciences, Northeastern University, Boston, MA 02115}
}

\begin{document}

\maketitle

\begin{abstract}
  Existing conformal prediction methods for time-to-event outcomes leverage only baseline covariates, producing prediction intervals that are insufficiently informative to facilitate decision making. We propose History-Aware Prediction Sets (HAPS), a conformal framework that constructs prediction sets for individual event times using covariate histories observed up to a decision time, targeting coverage among individuals who have survived to this time. HAPS handles right censoring adjusted for time-varying confounders via inverse probability of censoring weighting. When the censoring weights are consistently estimated, it achieves PAAC (probably asymptotically approximately correct) coverage among survivors. We further propose two doubly robust extensions of HAPS to weaken reliance on consistent estimation of the censoring distribution. In simulations, HAPS and its extensions reduce median prediction interval length by up to 75\% relative to baseline comparators while maintaining close to nominal coverage. On two public benchmark data sets, HAPS reduces the median interval length by up to 60\% for predictions at year 5, compared to the baseline comparators.
\end{abstract}

\section{Introduction}
Many real-world decisions involve predicting individual event times for those who are event-free at a prediction time $\tau$ (i.e., with event time $T>\tau$), using covariate histories observed up to $\tau$. 
Modern machine learning methods can flexibly incorporate high-dimensional covariates to produce accurate point predictions of event times. However, in high-stakes settings, point predictions alone are insufficient; rigorous uncertainty quantification is needed for reliable decision-making \citep{banerji2023clinical}.

Conformal prediction \citep{vovk2005algorithmic} is a general and model-agnostic framework for uncertainty quantification that can be applied with off-the-shelf prediction algorithms and has recently been extended to time-to-event outcomes under right censoring.
Existing work has focused on constructing lower prediction bounds (LPBs)
\citep{candes2023conformalized, gui2024conformalized, sesia2025doubly, davidov2025conformalized, farina2025doubly, si2025training}, with more recent efforts constructing mixed-type and two-sided prediction intervals \citep{holmes2024two, qin2025conformal, yi2025survival}. However, these methods are static in the sense that they perform calibration at the time origin using only baseline covariates, focusing on marginal coverage guarantees over the entire population. Existing methods do not address the dynamic task of constructing prediction intervals at a decision time $\tau$ for individuals who have survived to $\tau$, using their covariate history. The dynamic setting is more difficult because of the updated prediction target and the calibration problem. First, the appropriate notion of coverage must be updated to condition on survivors at $\tau$, which we refer to as \emph{survivor-conditional coverage}. Second, right censoring coarsens not only the event time but also the covariate trajectory updated over time.

We propose a conformal framework that constructs two-sided prediction intervals for individual event times among survivors at a prespecified decision time $\tau$, based on time-varying covariate history observed up to $\tau$. Methodologically, we formulate dynamic conformal calibration as estimation of an optimal scalar parameter that indexes a nested family of candidate sets at time $\tau$. 
Under right censoring, we derive an observed-data estimating equation for the calibration parameter using inverse probability of censoring weighting (IPCW) \citep{robins1992recovery, rotnitzky2005inverse}, which maps the censoring-free calibration condition to observed data without requiring correct modeling of the event-time distribution. 
We then integrate these estimating equations with split conformal prediction and establish asymptotic survivor-conditional coverage at time $\tau$. We further develop two doubly robust extensions of HAPS: one based on a doubly robust post-processing procedure and another based on augmented IPCW (AIPCW), to reduce reliance on consistent estimation of the censoring distribution.

Empirically, our simulations show that, in the presence of time-varying confounders for censoring, baseline conformal methods that use only baseline covariates to adjust for censoring can have biased coverage. 
By leveraging time-varying covariates, HAPS maintains close to nominal coverage while producing substantially narrower two-sided prediction intervals than baseline methods.
On two public benchmark data sets from the \texttt{survival} R package 
(details provided in Section \ref{sec:application}), HAPS reduces the median interval length by 9--60\% for predictions at year 5, relative to baseline methods.

\subsection{Related work}

\paragraph{Conformal prediction for right censored data.}

Early developments on conformal prediction for time-to-event outcomes constructed LPBs under \emph{type-I censoring} \citep{candes2023conformalized, gui2024conformalized}. 
Subsequent work has addressed the more common right censoring setting, by censoring imputation approaches that reduce the problem to the type-I censoring setting \citep{sesia2025doubly}, weighting approaches to handle right censoring \citep{davidov2025conformalized}, and doubly robust methods based on semiparametric efficiency theory for right censoring \citep{farina2025doubly, si2025training}. 
Recently, \citet{holmes2024two} proposed mixed-type prediction sets based on censoring status prediction,
and two-sided prediction intervals have been proposed using inverse probability of censoring weighting \citep{yi2025survival} and bootstrap-based methods under working models \citep{qin2025conformal}. 
A detailed summary of the above literature is provided in Appendix \ref{app:literature}.

Despite these advances, existing conformal survival methods share a common limitation in that the target is a static prediction problem defined at the time origin using only baseline covariates. In many settings, baseline covariates only explain a small proportion of variation for time-to-event outcomes, which can be distal, leading to prediction sets that are too wide to be useful in practice \citep{henderson2001accuracy}. 
Figure \ref{fig:toy_example} illustrates this limitation using two benchmark simulation setups from \citet{gui2024conformalized} and \citet{farina2025doubly}. Here, even when oracle quantiles of the conditional event time distribution are used, the resulting 90\% LPBs are close to zero.


\begin{figure}[ht]
    \centering
    \includegraphics[width=0.33\linewidth]{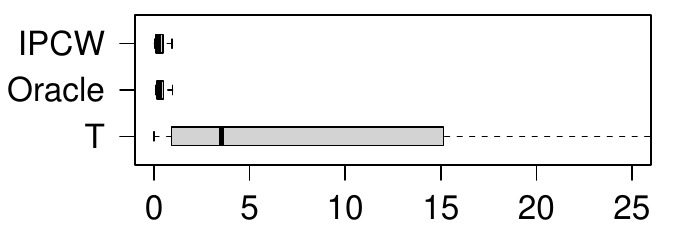}
    \hspace{1em}
    \includegraphics[width=0.33\linewidth]{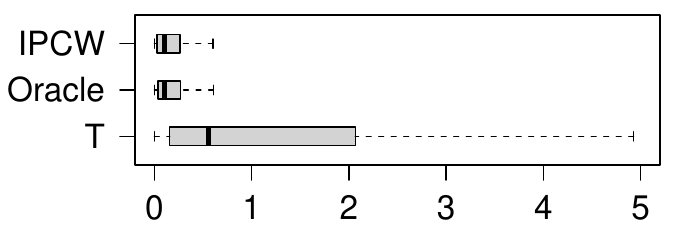}
    \caption{Intrinsic limitation of static approaches to construct LPBs. Boxplots of 90\% LPBs from the IPCW approach in \citet{farina2025doubly} with random survival forests, the oracle 0.1 quantile of the conditional event time distribution given baseline covariate, and the true individual event times $T$, under two benchmark simulation settings: (left) simulation setup 1 in \citet{gui2024conformalized}, and (right) simulation setup 1 in \citet{farina2025doubly}.
    }
    \label{fig:toy_example}
\end{figure}

\paragraph{Dynamic survival prediction with time-varying covariates.}
Independent of conformal prediction, dynamic survival prediction using time-varying covariates (see \citet{rizopoulos2017dynamic} for a review) has been an important area of research in survival analysis. 
Recent advances in dynamic survival prediction algorithms include approaches based on super learner \citep{tanner2021dynamic, rizopoulos2024optimizing}, random survival forests \citep{pickett2021random}, neural networks \citep{lee2019dynamic, lin2022deep, rhodes2023dynamic}, and support vector machines \citep{xie2024support}. However, these methods are designed to provide accurate point predictions and lack rigorous uncertainty quantification, limiting their uptake in practice.

\paragraph{Relationship with adaptive and sequential conformal prediction.}
Updating prediction sets as new information arrives is conceptually related to adaptive or sequential conformal prediction for time series and online learning \citep{gibbs2021adaptive, gibbs2024conformal, batra2023conformal, xu2023sequential, xu2023conformal}. However, our setting differs in that prediction sets evolve because the survivor population at $\tau$ and the available covariate history change over time, not because new outcomes are observed. Accordingly, our goal is to achieve valid coverage among survivors at each prediction time $\tau$, rather than over a stream of outcomes as in time series or online learning settings.


\section{Problem setup and assumptions}\label{sec:preliminary}

Let $T$ denote the event time of interest and $C$ the right censoring time. 
Let $Z_t$ denote the covariates measured at time $t\geq 0$, and let $\bar Z_t = \{Z_s: 0 \leq s \leq t\}$ denote the covariate history up to time $t$. 
In particular, $Z_0$ represents baseline covariates measured at the time origin.
We observe $X = \min(T,C)$ and the event indicator $\Delta = \ind(T < C)$, together with the covariate history observed up to time $X$.
Thus, the observed data for a single individual is $O = (X, \Delta, \bar Z_X)$.
We assume an i.i.d.\ sample $\Dc = \{O_i\}_{i=1}^n$ drawn from the joint distribution of $O$.

\paragraph{Prediction target and coverage criterion.}
For a selected prediction time $\tau \ge 0$, our goal is to construct a prediction set for the event time $T_{n+1}$ of a new test individual, drawn from the same distribution as the training sample, who has survived to time $\tau$, based on the covariate history $\bar Z_{\tau,n+1}$ observed up to that time.
Specifically, given a desired coverage level $1-\alpha \in (0,1)$, we seek a prediction set $\hat\Cc(\bar Z_{\tau,n+1})$ satisfying the survivor-conditional coverage guarantee:
\begin{align}
    \PP\!\left\{ T_{n+1} \in \hat\Cc(\bar Z_{\tau,n+1}) \mid T_{n+1} > \tau \right\} \ge 1-\alpha.
    \label{eq:coverage_exact}
\end{align}
The prediction set $\hat\Cc(\bar Z_{\tau,n+1})$ can be one-sided (where the upper or lower limit is infinity) or two-sided, depending on the downstream application context. 
LPBs provide conservative guarantees on how long an individual is likely to remain event-free, while upper prediction bounds (UPBs) characterize the most optimistic yet plausible event timing; for example, whether the potential benefit of an intensive treatment is large enough to justify its cost or burden. 
In the following, we focus on constructing two-sided prediction intervals, with the understanding that the framework can also be applied to construct one-sided prediction bounds.



\begin{remark}
In contrast to existing conformal survival methods that rely only on baseline covariates $Z_0$, our prediction set $\hat\Cc(\bar Z_{\tau,n+1})$ explicitly incorporates the covariate history up to the prediction time $\tau$.
Allowing $\tau$ to vary allows prediction sets to be updated as new covariate information arrives, resulting in a dynamic procedure that refines uncertainty quantification over time.
\end{remark}

We now state the assumptions under which our procedure is developed.
Let
$\lambda_C(t |~ \cdot ~)
= \lim_{h \to 0^+} \PP(t \le C < t+h \mid \cdot ~,\, C \ge t,\, T \ge t)/h$
denote the conditional hazard function of the censoring time.

\begin{assumption}[Conditional independent censoring given covariate history]\label{ass:cen_ind2}
For all $t \ge 0$, 
$\lambda_C(t | \bar Z_T, T) = \lambda_C(t | \bar Z_t)$.
\end{assumption}

Assumption~\ref{ass:cen_ind2} states that, among individuals who are still at risk (i.e., have not experienced the event or been censored), the instantaneous risk of censoring may depend on past covariate history but not on future covariates or the event time.
This assumption is weaker than the independent censoring assumption ($C \bigCI T$) or the conditional independent censoring assumption given baseline covariates ($C \bigCI T \mid Z_0$) imposed in the existing conformal survival literature.
In particular, it allows dependence between the event time and censoring time through time-varying covariates.

\begin{assumption}[Positivity]\label{ass:strict_positivity}
{(i) $\PP(T>\tau)>0$;}
 (ii) there exists $\eta > 0$ such that {$\PP(C>T \mid T, \bar Z_T) > \eta$ almost surely}.
\end{assumption}
{
Assumption~\ref{ass:strict_positivity}(i) requires that survival beyond $\tau$ is possible; in practice, one can choose $\tau$ so that enough individuals remain at risk at $\tau$.
Assumption~\ref{ass:strict_positivity}(ii) requires that, for any realized covariate history, the event time is observed with non-negligible probability. This condition ensures identifiability of the upper tail of the event-time distribution and is standard in time-to-event analyses under right-censoring.
}

\section{Dynamic conformal prediction}

We first develop the dynamic conformal prediction framework in the absence of right censoring to highlight the main calibration idea. We will then show how to extend the formulation to the observed data with right censoring.

\subsection{Under no right censoring}\label{sec:method_noC}

For a fixed prediction time $\tau \ge 0$, let $\bar{\mathcal{Z}}_\tau$ denote the support of $\bar Z_\tau$ conditional on $T>\tau$. Suppose we are given a prediction algorithm
$\Ac$ {for the conditional distribution of $T$ given $(\bar Z_\tau, T>\tau)$.}
We consider a class of nested candidate prediction sets
$\{\Cc_{\tau,\theta}(\bar z_\tau;\Ac) : \theta \in \Theta\},
$
indexed by a scalar parameter $\theta \in \Theta \subseteq \mathbb{R}$ that is increasing in $\theta$, 
where the index space $\Theta$ controls how rich the candidate class is. We will discuss below how $\Theta$ is chosen with an illustrative example on the candidate class.

Our goal is to obtain the prediction set with the smallest size in this class that achieves the desired survivor-conditional coverage. Formally, let $\theta^*$ be the smallest $\theta\in\Theta$ such that 
\begin{align}
    \PP\left\{\left. T\in \Cc_{\tau,\theta}(\bar Z_\tau;\Ac) ~\right|~ T>\tau \right\} \geq 1-\alpha.  \label{eq:coverage_Cc_theta}
\end{align}
Then the corresponding prediction set $\Cc_{\tau,\theta^*}$ is the smallest set in the candidate class that satisfies the coverage requirement among survivors at time $\tau$. Existence of $\theta^*$ follows from the monotonicity of the left-hand side of \eqref{eq:coverage_Cc_theta} in $\theta$, provided the class contains sufficiently large sets. Note that $\theta^*$ implicitly depends on the prediction algorithm $\Ac$, the form of the candidate class $\{\Cc_{\tau,\theta}\}_{\theta\in\Theta}$, and the conditional distribution of $(T,\bar Z_{\tau})\mid (T>\tau)$.

\paragraph{Construction of the candidate class.}


For time-to-event prediction, estimates for the conditional quantile of $T$ can be obtained using off-the-shelf software, so a natural way to construct the candidate class 
$\{\Cc_{\tau,\theta}\}_{\theta\in\Theta}$
is through quantile-based intervals. In particular, when placing equal emphasis on lower- and upper-tail miscoverage, one may consider 
\begin{align}
     \Cc_{\tau,\theta}(\bar z_\tau; \Ac) = (\hat{q}_{1-\theta}(\bar z_\tau), \hat{q}_{\theta}(\bar z_\tau)), \label{eq:Cc_condi_quantile}
\end{align}
where $\theta\in[0.5, 1]$, and $\hat{q}_{\theta}(\bar z_\tau)$ denotes the estimated $\theta$-quantile of $T\mid (\bar Z_\tau = \bar z_\tau,T>\tau)$ according to $\Ac$. This candidate class is used in our simulations and data applications.
In cases where lower- and upper-tail miscoverage are not equally undesirable, asymmetric quantile pairs can be used instead. Additionally, the choice of quantile levels determines the location of the nested candidate intervals and thus affects the length of the final prediction interval. To further improve efficiency, the position of the candidate intervals can be selected in a data-adaptive manner.
More generally, one can form the candidate class based on conformity scores, with details provided in Appendix \ref{app:conformity_score}.

\paragraph{Reformulation of the calibration condition.}

Note that the calibration parameter $\theta^*$ is unknown and needs to be estimated from data. To motivate our estimator and its extension in right censored settings, we rewrite the coverage condition \eqref{eq:coverage_Cc_theta} in terms of an estimating inequality. 
Note that \eqref{eq:coverage_Cc_theta} can be equivalently written as: 
\begin{align}
    \E\left[\left. \ind\{T\in \Cc_{\tau,\theta}(\bar Z_\tau; \Ac)\} \right| T>\tau \right] \geq 1-\alpha, \label{eq:EE_1}
\end{align}
which can be rewritten in unconditional form as $\E\{U^*(\theta,\Ac)\} \geq 0$, where
\begin{align}
    U^*(\theta;\Ac) 
    &=  \ind(T>\tau) \left[\ind\{T\in \Cc_{\tau,\theta}(\bar Z_\tau; \Ac)\} - (1-\alpha) \right]. \label{eq:D(theta)}
\end{align}
Therefore, $\theta^*$ can be characterized as
\begin{align}
    \theta^* = \inf \{\theta\in\Theta: \E\{U^*(\theta;\Ac)\} \geq 0\}. \label{eq:EE_ineq_D}
\end{align}
When equality holds at $\theta^*$, the function $U^*(\theta;\Ac)$ serves as a nonparametric estimating function for $\theta^*$ in the censoring-free data.



\subsection{Under right censoring}\label{sec:method_C}

We now turn to the setting under right censoring. In this case, neither the event time $T$ nor the covariate history $\bar Z_\tau$ is always fully observed, so the censoring-free calibration procedure developed in Section~\ref{sec:method_noC} no longer applies directly. Nevertheless, the key insight from the previous subsection still applies: if we can learn the calibration parameter $\theta^*$, then a valid prediction set can be constructed as $\hat\Cc = \Cc_{\tau,\hat\theta}$. We first consider mapping the censoring-free data estimating function of $\theta^*$ to the observed data using IPCW and then consider two doubly robust variants.

\paragraph{Identification of $\theta^*$.}

Let 
$G(t|\bar Z_t) = \exp\{ - \int_0^t \lambda_C(u |\bar Z_u)\,du \} = \PP(C > t \mid \bar Z_t)$
denote the conditional survival probability of censoring given covariate history. 
Applying IPCW \citep{robins1992recovery, rotnitzky2005inverse} to the estimating function of $\theta^*$ in \eqref{eq:D(theta)} results in the following observed data estimating functions: 
\begin{align}
    U(\theta; G,\Ac) 
    & =  \frac{\Delta \ind(X >\tau)}{G(X|\bar Z_X)} \left[\ind\{X \in \Cc_{\tau,\theta}(\bar Z_\tau; \Ac)\} - (1-\alpha) \right], \label{eq:IPCW}
\end{align}
Under Assumption \ref{ass:cen_ind2}, $G(T|\bar Z_T) = \PP(C>T \mid T,\bar Z_T)$, and the IPCW factor $\Delta / G(X|\bar Z_X) = \ind({T<C}) / G(T|\bar Z_T) = \ind(T<C)/\PP(T<C\mid T,\bar Z_T)$ reweights uncensored individuals to represent the hypothetical population under no right censoring, thereby correcting the bias induced by censoring. 

\begin{lemma}[]\label{lem:identification_IPCW}
    Under Assumptions \ref{ass:cen_ind2} and \ref{ass:strict_positivity}, $\E\left\{ U(\theta;G,\Ac) \right\} = \E\{U^*(\theta;\Ac)\}$
    for all $\theta\in\Theta$. Therefore, $\theta^* = \inf \{\theta\in\Theta: \E\{ U(\theta;G,\Ac) \} \geq 0\}$.
\end{lemma}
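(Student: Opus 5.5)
The plan is to reduce the observed-data estimating function to the full-data one by conditioning on the full-data variables $(T,\bar Z_T)$ and applying the tower property.

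\emph{Step 1: rewrite the integrand on the event $\Delta=1$.} When $\Delta=1$ we have $X=T$, and the covariate history observed up to $X$ is $\bar Z_T$, which contains $\bar Z_\tau$ whenever $T>\tau$. It follows that
\[
U(\theta;G,\Ac)=\frac{\ind(T<C)}{G(T\mid\bar Z_T)}\,\ind(T>\tau)\bigl[\ind\{T\in\Cc_{\tau,\theta}(\bar Z_\tau;\Ac)\}-(1-\alpha)\bigr]
=\frac{\ind(T<C)}{G(T\mid\bar Z_T)}\,U^*(\theta;\Ac).
\]
This holds because both sides vanish when $\Delta=0$. Here $\Ac$ is treated as fixed, for instance trained on an independent split, so $U^*(\theta;\Ac)$ is a measurable function of $(T,\bar Z_T)$.

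\emph{Step 2: condition on $(T,\bar Z_T)$.} This gives
\[
\E\{U(\theta;G,\Ac)\}=\E\Bigl[U^*(\theta;\Ac)\,\frac{\PP(C>T\mid T,\bar Z_T)}{G(T\mid\bar Z_T)}\Bigr].
\]
Ties $C=T$ are ignored because the distributions are continuous.

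\emph{Step 3: show $\PP(C>T\mid T,\bar Z_T)=G(T\mid\bar Z_T)$.} This is the main step, and it is the claim asserted in the text preceding the lemma. I would use the product-integral representation
\[
\PP(C>t\mid T,\bar Z_T)=\exp\Bigl\{-\int_0^t\lambda_C(u\mid\bar Z_T,T)\,du\Bigr\}\quad\text{for } t\le T,
\]
which is valid because the at-risk condition $T\ge u$ holds automatically for $u\le T$. By Assumption~\ref{ass:cen_ind2}, $\lambda_C(u\mid \bar Z_T,T)=\lambda_C(u\mid\bar Z_u)$. Evaluating at $t=T$ then gives $\exp\{-\int_0^T\lambda_C(u\mid\bar Z_u)\,du\}=G(T\mid\bar Z_T)$ by the definition of $G$.

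The delicate point is that the hazard is conditional on the whole future trajectory $\bar Z_T$. The right way to read it is as a hazard in the filtration generated by $(T,\bar Z_T)$ and by $C$ up to time $u$, with the assumption asserting that this hazard depends only on $\bar Z_u$. If the paper intends $G$ defined through this product integral, Step 3 is essentially definitional. Otherwise I would state this identification explicitly, as is standard in Robins--Rotnitzky coarsening-at-random arguments.

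\emph{Step 4: cancel and conclude.} Assumption~\ref{ass:strict_positivity}(ii) gives $G(T\mid\bar Z_T)>\eta>0$ almost surely. The ratio in Step 2 is therefore well defined and equals one, and $|U|\le 1/\eta$ is integrable. Hence $\E\{U(\theta;G,\Ac)\}=\E\{U^*(\theta;\Ac)\}$ for every $\theta\in\Theta$.

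Substituting this identity into the characterization \eqref{eq:EE_ineq_D} yields the stated formula for $\theta^*$. Assumption~\ref{ass:strict_positivity}(i) is used only to ensure that the survivor-conditional coverage, and hence $\theta^*$, is well defined.
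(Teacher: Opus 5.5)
Your proposal is correct and follows essentially the same route as the paper: rewrite $U$ on $\{\Delta=1\}$ as $\ind(T<C)\,U^*/G(T\mid\bar Z_T)$, condition on $(T,\bar Z_T)$, and use Assumption~\ref{ass:cen_ind2} with the exponential hazard representation to identify $\PP(C>T\mid T,\bar Z_T)$ with $G(T\mid\bar Z_T)$. Two details of yours are slightly more careful than the paper's version, which asserts the identity for all $t>0$ and never invokes positivity explicitly: you restrict the hazard representation to $t\le T$, and you use positivity to justify the cancellation and integrability.
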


Lemma~\ref{lem:identification_IPCW} shows that the IPCW estimating function preserves the calibration condition from the censoring-free setting. In particular, $\theta^*$ can be identified as the minimum value of $\theta$ for which the weighted estimating equation is nonnegative in expectation. A proof is provided in Appendix~\ref{app:proof_identification}.

\paragraph{Estimation of $\theta^*$ and the prediction set.}

We next integrate the calibration condition of Lemma \ref{lem:identification_IPCW} to construct prediction sets that target the survivor-conditional coverage guarantee in \eqref{eq:coverage_Cc_theta}. 
For illustration, we focus on split conformal prediction, but the same framework can be applied to other conformal prediction algorithms (see \citet{angelopoulos2024theoretical} for a comprehensive review).
The resulting dynamic split conformal prediction procedure is summarized in Algorithm \ref{alg:split_dynamicCP}.


\begin{algorithm}[tbh!]
\caption{History-Aware Prediction Sets (HAPS)}
\label{alg:split_dynamicCP}
\begin{algorithmic}
\STATE \textbf{Input:} observed data $\Dc = \{O_i\}_{i=1}^n$; coverage level $1-\alpha$; prediction time $\tau$; prediction model $\Ac$ for $T\mid (\bar Z_\tau, T>\tau)$; a rule of constructing the candidate class after fitting the prediction model; censoring model for $G$; covariate history $\bar Z_{\tau,n+1}$ from a new test individual.
\begin{enumerate}
    \item Randomly split $\Dc$ into a training set $\Dctr$ and a calibration set $\Dccal$ with almost equal size.
    \item Fit the prediction model $\hat\Ac$ on $\Dctr$, and construct nested candidate sets $\{\Cc_{\tau,\theta}\}_{\theta \in \Theta}$.
    \item On $\Dctr$, estimate the nuisance parameters $G$, and denote the estimate as $\hat G$. 
    \item On $\Dccal$, compute 
    $\hat\theta
    = \inf\left\{
    \theta\in\Theta: 
    \sum_{i \in \Dccal}
    U_{i}(\theta;\hat{G},\hat{\Ac}) \ge 0
    \right\}.$
\end{enumerate}
\STATE \textbf{Output:} prediction set $\hat\Cc(\bar Z_{\tau,n+1}) = \Cc_{\tau,\hat\theta}(\bar Z_{\tau,n+1} ; \hat\Ac)$.
\end{algorithmic}
\end{algorithm}

In practice, $\hat\theta$ in Algorithm \ref{alg:split_dynamicCP} can be computed using a grid search on $\Theta$ restricted to a discrete and finite set. For example, with the quantile-based candidate class in \eqref{eq:Cc_condi_quantile}, we take $\Theta$ to be a uniform grid on $[0.5, 1]$ with increment $0.01$ in both the simulation and data applications. 
The nuisance parameter $G$ can be estimated using standard survival modeling techniques that accommodate dependent right censoring with time-varying covariates, with detailed discussion provided in Appendix \ref{app:G_est}.



\subsection{Survivor-conditional coverage }\label{sec:theory}

We now present the theoretical guarantee for HAPS. 
In particular, we show that the constructed prediction sets achieve survivor-conditional coverage at a given prediction time $\tau$, up to terms that vanish as the calibration sample size increases and the estimator for the censoring distribution is consistent.

We consider the following $L_2$-norm of the censoring distribution estimation error: 
$\|\hat G - G\|_2 = \E_{X,\bar Z_X}\{|\hat G(X\mid \bar Z_X) - G(X\mid \bar Z_X)|^2\}^{1/2}$, where the expectation is taken with respect to $(X,\bar Z_X)$ conditional on the data used to estimate $\hat G$.
Let $\sttau = \PP(T>\tau)$ denote the marginal survival probability at time $\tau$, and $|\Dccal|$ denote the calibration sample size. 


\begin{theorem}\label{thm:IPCW_coverage}
    Under Assumptions~\ref{ass:cen_ind2} and ~\ref{ass:strict_positivity}, and assuming that $\hat G(X_{n+1}|\bar Z_{X_{n+1}, n+1}) > \eta$ almost surely and that the class of candidate sets contains large enough sets so that Algorithm \ref{alg:split_dynamicCP} admits a feasible solution, then for any $\epsilon \in(0,1)$, there exists a constant $\consta>0$ such that, with probability at least $1-\epsilon$ over $\Dc$, 
    \begin{align}
        & \PP(T_{n+1} \in \hat\Cc(\bar Z_{\tau, n+1})\mid T_{n+1}>\tau, \Dc) \nonumber \\
        &\qquad \geq 1-\alpha - \sttau^{-1}\eta^{-2} \|\hat G - G\|_2  
        - \sttau^{-1}\eta^{-1} \left( \sqrt{\frac{1}{2}\log \frac{1}{\epsilon}} + \consta \right) \frac{1}{\sqrt{|\Dccal|}}. \label{eq:coverage_lower_bound}
    \end{align}
\end{theorem}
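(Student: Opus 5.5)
The argument conditions on the training split $\Dctr$, so that $\hat\Ac$, the candidate class $\{\Cc_{\tau,\theta}\}$ and $\hat G$ are fixed. The calibration points and the test point are then i.i.d.\ from the law of $O$. Write $F(\theta)=\PP(T\in\Cc_{\tau,\theta}(\bar Z_\tau;\hat\Ac)\mid T>\tau,\Dctr)$. The target quantity is $F(\hat\theta)$, because $\hat\theta$ depends only on $\Dc$ and the test point is independent of $\Dc$. By the unconditional rewriting \eqref{eq:D(theta)}, we have $\E\{U^*(\theta;\hat\Ac)\mid\Dctr\}=\sttau\{F(\theta)-(1-\alpha)\}$, and Lemma~\ref{lem:identification_IPCW} gives $\E\{U(\theta;G,\hat\Ac)\mid\Dctr\}=\sttau\{F(\theta)-(1-\alpha)\}$. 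It therefore suffices to show that $\E\{U(\hat\theta;G,\hat\Ac)\mid \Dctr,\hat\theta\}$, meaning the expectation over a fresh observation with $\hat\theta$ plugged in, is at least $-\eta^{-2}\|\hat G-G\|_2-\eta^{-1}(\sqrt{\tfrac12\log(1/\epsilon)}+\consta)|\Dccal|^{-1/2}$. Dividing by $\sttau$ then yields \eqref{eq:coverage_lower_bound}.

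Let $P_n$ denote the empirical average over $\Dccal$ and $P$ the population expectation given $\Dctr$. I would decompose
\[
P U(\hat\theta;G)=\underbrace{P_nU(\hat\theta;\hat G)}_{\ge 0}+\bigl\{P U(\hat\theta;\hat G)-P_nU(\hat\theta;\hat G)\bigr\}+\bigl\{PU(\hat\theta;G)-PU(\hat\theta;\hat G)\bigr\}.
\]
The first term is nonnegative by the definition of $\hat\theta$ in Algorithm~\ref{alg:split_dynamicCP}, using feasibility together with right-continuity. If $\hat\theta$ is attained as a minimum, for instance on a finite grid, this is immediate. Otherwise I would take $\theta\downarrow\hat\theta$ and use monotonicity of the nested sets.

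\textbf{Nuisance term.} Write $|U(\theta;G)-U(\theta;\hat G)|\le \Delta\,|1/G-1/\hat G|=\Delta|\hat G-G|/(G\hat G)$ evaluated at $(X,\bar Z_X)$. On $\{\Delta=1\}$ we have $G(X|\bar Z_X)=\PP(C>T\mid T,\bar Z_T)>\eta$ by Assumption~\ref{ass:strict_positivity}(ii), and $\hat G>\eta$ by the stated hypothesis. The indicator bracket is bounded by $1$. Cauchy--Schwarz (or simply Jensen) then gives a bound of $\eta^{-2}\|\hat G-G\|_2$, uniformly in $\theta$, so the bound also holds at the random $\hat\theta$.

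\textbf{Empirical process term.} This is the main obstacle, because $\hat\theta$ depends on $\Dccal$ and a uniform deviation bound over $\theta$ is needed. Since $\hat G$ is fixed given $\Dctr$, the functions $f_\theta=U(\theta;\hat G)$ are bounded in $[-\eta^{-1},\eta^{-1}]$, with range of width at most $\eta^{-1}$. This holds because the bracket lies in $[-(1-\alpha),\alpha]$ and the weight lies in $[0,\eta^{-1}]$. Apply McDiarmid's inequality to $\sup_\theta (P-P_n)f_\theta$: changing one point moves it by at most $\eta^{-1}/|\Dccal|$, so with probability at least $1-\epsilon$ the supremum is at most its expectation plus $\eta^{-1}\sqrt{\tfrac12\log(1/\epsilon)/|\Dccal|}$. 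Bound the expectation by symmetrization and the Rademacher complexity. The family $\{\ind\{x\in\Cc_{\tau,\theta}\}\}$ is nested in $\theta$, so it has VC dimension $1$. Multiplying by the fixed weight $\Delta\ind(x>\tau)/\hat G$ and subtracting a constant gives a complexity of at most $\consta\eta^{-1}|\Dccal|^{-1/2}$ for a universal constant $\consta$. For a finite grid $\Theta$, a union bound also works with $\consta\propto\sqrt{\log|\Theta|}$. Finally, the statement holds with probability at least $1-\epsilon$ conditionally on $\Dctr$, so integrating over $\Dctr$ keeps it at probability at least $1-\epsilon$ over $\Dc$. Combining the three terms and dividing by $\sttau$ completes the argument.
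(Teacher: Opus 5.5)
Your proposal is correct and follows essentially the same route as the paper's proof. The shared steps are: identification via Lemma~\ref{lem:identification_IPCW}, the same decomposition with the nonnegative calibration average at $\hat\theta$, the $\eta^{-2}\|\hat G-G\|_2$ nuisance bound via Jensen/Cauchy--Schwarz, and McDiarmid plus a complexity bound for the uniform deviation, where your symmetrization/VC-chain or finite-grid argument spells out what the paper imports from Lemma~4 of Yang et al.\ and Lemma~S.2.1 of Farina et al. One cosmetic fix: since you need a lower bound on $(P-P_n)f_{\hat\theta}$, the supremum to control is $\sup_\theta (P_n-P)f_\theta$ (or the absolute deviation, as the paper does), not $\sup_\theta (P-P_n)f_\theta$; the same McDiarmid--symmetrization argument handles either direction.
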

Theorem~\ref{thm:IPCW_coverage} shows that {HAPS achieves \emph{probably asymptotically approximately correct (PAAC)} coverage among survivors at time $\tau$ when $G$ is consistently estimated. We adopt the term ``PAAC" from \citet{qiu2023prediction}, which is referred to as ``asymptotic PAC" in \citet{farina2025doubly}.}
The gap between the lower bound in \eqref{eq:coverage_lower_bound} and the nominal level $1-\alpha$ decomposes into two components: (i) an estimation error term due to learning the censoring distribution $G$, and (ii) a calibration error term of order $|\Dccal|^{-1/2}$ from split conformal calibration.
Consequently, when $\hat G$ is consistent and the calibration sample size grows, the coverage lower bound converges to the nominal level $1-\alpha$. 
The proof of Theorem~\ref{thm:IPCW_coverage} is provided in Appendix~\ref{app:proof_coverage}. 

When a (semi)-parametric model is used to estimate $G$, consistency of $\hat G$ can be achieved under correct model specification. In practice, one may also use flexible nonparametric or machine learning methods to estimate $G$, which can reduce reliance on parametric model assumptions. To further reduce sensitivity to potential misspecification of the censoring model, we next present two doubly robust extensions of HAPS.


\section{Doubly robust extensions of HAPS}
 
\paragraph{Doubly robust post-processing.}
Recall that $\hat{q}_{\theta}(\bar z_\tau)$ denotes the estimated $\theta$-quantile of $T\mid (\bar Z_\tau = \bar z_\tau,T>\tau)$ under the prediction algorithm $\Ac$. 
Given the prediction interval $\hat\Cc(\bar Z_{\tau,n+1})$ from Algorithm \ref{alg:split_dynamicCP}, we consider the post-processing set 
\[
\hat\Cc_{\DR}(\bar Z_{\tau,n+1}) = \hat\Cc(\bar Z_{\tau,n+1})\cup \left(\hat q_{\alpha/2}(\bar Z_{\tau,n+1}),~ \hat q_{1-\alpha/2}(\bar Z_{\tau,n+1}) \right).
\]
We call this extension HAPS-DR.
The prediction set $\hat \Cc_{\DR}$ is doubly robust in the sense that it achieves PAAC coverage among survivors when either $\hat G$ is consistent or the conditional quantile estimators $\hat q_{\alpha/2}$ and $\hat q_{1-\alpha/2}$ are consistent. 
When $\hat G$ is consistent, PAAC coverage follows from Theorem~\ref{thm:IPCW_coverage}, since $\hat \Cc_{\DR}$ contains the original HAPS prediction set $\hat\Cc$. On the other hand, when $\hat q_{\alpha/2}$ and $\hat q_{1-\alpha/2}$ consistently estimate the corresponding conditional quantiles of 
$T \mid (\bar Z_\tau, T>\tau)$, the quantile interval
$\left(\hat q_{\alpha/2}(\bar Z_{\tau,n+1}),~ \hat q_{1-\alpha/2}(\bar Z_{\tau,n+1}) \right)$ 
achieves asymptotic $1-\alpha$ coverage among survivors, and hence so does its union with $\hat\Cc$.

\paragraph{AIPCW.}
Unlike HAPS-DR, this extension applies the general AIPCW approach to derive a doubly robust estimating function for $\theta^*$, which leads to prediction intervals with doubly robust coverage. AIPCW \citep{robins1992recovery,rotnitzky2005inverse} is a general approach for doubly robust censoring adjustment by augmenting IPCW estimating equations with outcome modeling.

Before presenting the AIPCW estimating function,
we first introduce the counting process and martingale notation for $C$.
Let $N_C(t) = \ind(X \leq t, \Delta = 0)$, and 
$M_C(t) = N_C(t) - \int_0^t \ind(X \geq u) \lambda_C(u|\bar Z_u) du$.
The AIPCW estimating function for $\theta^*$ is: 
\begin{align}
    U_{\AIPCW}(\theta; G, h_{\tau}, \Ac) 
    & = U(\theta; G,\Ac) + \int_0^{X} h_\ttau(u,\bar Z_u;\theta,\Ac) \frac{dM_C(u)}{G(u|\bar Z_u)}, \label{eq:U_AIPCW}
\end{align}
where 
$h_\ttau(u,\bar Z_u; \theta,\Ac)
= \E\left(\left. \ind(T>\ttau) \left[\ind\{T\in \Cc_{\ttau,\theta}(\bar Z_\ttau;\Ac)\} - (1-\alpha) \right] \right| \bar Z_u, T\geq u\right)$.

Compared to \eqref{eq:IPCW}, $U_{\AIPCW}$ involves an additional nuisance function 
$h_{\tau}$, which is a conditional expectation involving the joint distribution of event time and covariate history. Estimating this conditional expectation can be challenging when covariate histories are observed in continuous time.
In Appendix~\ref{app:h_est}, we discuss these challenges and provide nonparametric identification and estimation of $h_{\tau}$. 
We construct the prediction set by augmenting Step~3 of 
Algorithm~\ref{alg:split_dynamicCP} with an estimator of $h_\tau$, and replacing 
$U$ by $U_{\AIPCW}$ in Step~4. 
We call this extension HAPS-A.

The prediction set 
$\hat\Cc_{\AIPCW}(\bar Z_{\tau,n+1})$ is rate doubly robust due to the mixed-bias property 
\citep{rotnitzky2021characterization, luo2025doubly}: the effect of nuisance estimation on the coverage error enters only through the product of the estimation errors for $G$ and $h_\tau$. 
Thus, $G$ and $h_\tau$ may be estimated using flexible machine learning methods, which often converge more slowly than $n^{-1/2}$ individually, while still allowing the coverage error to vanish at a $n^{-1/2}$ rate, provided standard regularity conditions hold and that the product error rate of $G$ and $h_{\tau}$ is faster than $n^{-1/2}$.

\section{Numerical experiments}\label{sec:simulation}

We consider the data generating mechanisms (DGMs) described in Appendix \ref{app:simu_DGM}, which vary in their event-time and censoring mechanisms and have censoring rates between 32\%--54\%.
We construct two-sided prediction intervals for survivors at $\tau \in\{0,3,6\}$ using HAPS and its two extensions (HAPS-DR and HAPS-A). 
As comparators, we consider a baseline IPCW method that uses only baseline covariates for both prediction and censoring adjustment (details in Appendix \ref{app:IPCW0}), denoted by `IPCW'. 
In addition, we also include two natural adaptations of this baseline method to the dynamic prediction setting: `trunc', which takes the intersection of the IPCW interval with $[\tau,\infty)$; and `LM', which applies the baseline IPCW method to the landmark subgroup at time $\tau$, i.e., individuals with $X>\tau$. Both baseline adaptations produce prediction intervals contained in $[\tau,\infty)$, the support of event time among survivors at $\tau$.

We consider fitting prediction models for HAPS and its two extensions (HAPS-DR and HAPS-A) using a neural-network dynamic survival model adapted from 
Dynamic-DeepHit  \citep{lee2019dynamic}, denoted by `DDH',  which leverages the covariate history up to the prediction time; landmarking-based Cox proportional hazards models (`Cox'); or random survival forests (`RSF'). For nuisance parameter estimation, we consider both semiparametric models and machine learning methods. Details of the prediction and nuisance models are provided in Appendix \ref{app:simu_prediction_model}. 

We evaluate the performance of the prediction intervals over 200 Monte Carlo replications. 
In each replication, we generate an observed sample of size 1000 to construct the prediction 
intervals, and an independent evaluation sample of size 500 without right censoring, which 
allows direct assessment of coverage.

\paragraph{Results.}
We report the empirical coverage, interval length, and lower and upper endpoints of the two-sided prediction intervals for each method over the 200 Monte Carlo replications. 
Figure \ref{fig:main_linWB1_micC2DGM1_rho03} summarizes the empirical coverage and interval length of the 90\% prediction intervals under two simulation setups, with boxplots computed among survivors at each $\tau$. 
Setup A generates data from DGM1, where the censoring distribution follows a proportional hazards model depending on time-varying covariates, and uses a Cox prediction model together with a correctly specified proportional hazards model for estimating the censoring distribution. 
Setup B generates data from DGM2, where the censoring distribution is a mixture of uniform and Weibull distributions depending on time-varying covariates, and uses a DDH prediction model together with machine learning methods for nuisance function estimation. 
Details of the simulation setups and implementation details, as well as additional summaries for the lower and upper endpoints of the two-sided intervals are provided in  Appendix~\ref{app:simu}.

We observe that HAPS and HAPS-A achieve close to 90\% coverage in both setups for survivors at all three $\tau$, 
while HAPS-DR tends to overcover. The prediction intervals for all three methods become tighter as $\tau$ increases, reflecting the efficiency gain from using more covariate information for prediction.
We also observe that HAPS and HAPS-A have similar median interval length, while HAPS-DR is typically wider, 
reflecting its conservativeness due to the doubly robust post-processing step.

On the other hand, baseline methods (`IPCW', `trunc', `LM') can show large coverage gaps. In both simulation setups, `IPCW' and `trunc' overcover at $\tau = 3$ and $\tau = 6$, while `LM' undercovers in Setup B at all three prediction times. Even at $\tau = 0$, where the coverage target is the full population, the baseline methods exhibit coverage gaps in both settings because adjustment for right censoring using only baseline covariates is insufficient.
Table \ref{tab:setupAB_length_ratio} summarizes the ratios of median prediction interval length for each version of HAPS relative to the baseline methods. Relative to the baseline methods, HAPS and its extensions reduce median interval length by \textbf{5\%--23\%} at $\tau = 3$ and \textbf{34--75\%} at $\tau = 6$.

\begin{figure}[ht!]
    \centering

    \begin{subfigure}{0.49\linewidth}
        \centering
        \includegraphics[width=\linewidth]{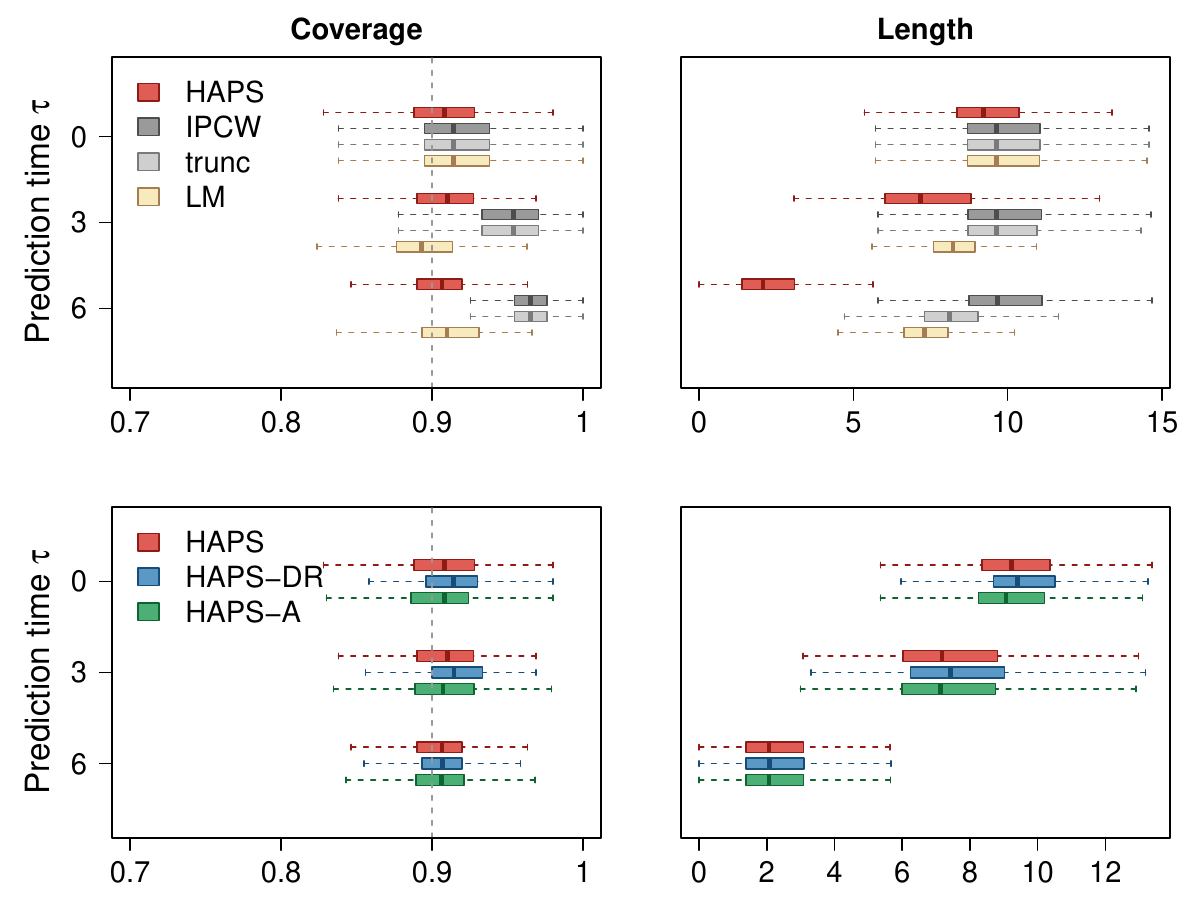}
        \caption{Setup A: DGM1 with Cox prediction model.}
        \label{fig:setup1}
    \end{subfigure}
    \hfill
    \begin{subfigure}{0.49\linewidth}
        \centering
        \includegraphics[width=\linewidth]{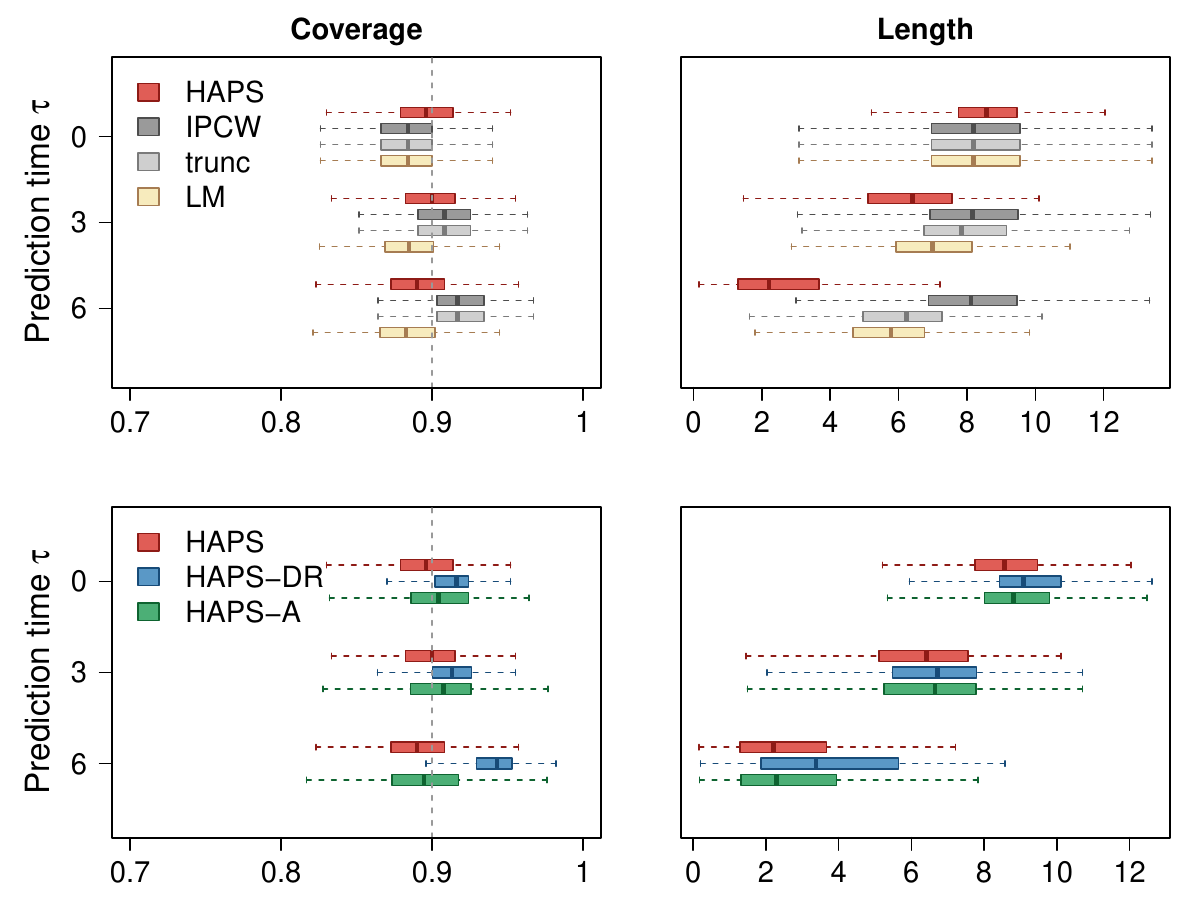}
        \caption{Setup B: DGM2 with DDH prediction model.}
        \label{fig:icml_simu1_dcp_rho03}
    \end{subfigure}


    \caption{Summary of 90\% prediction intervals across 200 Monte Carlo replications comparing HAPS with the  baseline methods (top) and extensions of HAPS (bottom).
    }
    \label{fig:main_linWB1_micC2DGM1_rho03}
\end{figure}


\begin{table}[!ht]
\centering
\caption{Ratio of median interval length for HAPS and its extensions relative to baseline methods.}
\label{tab:setupAB_length_ratio}
{\fontsize{8}{9}\selectfont
\begin{tabular}{llccccccccc}
\toprule
&  & \multicolumn{3}{c}{$\tau=3$} & \multicolumn{3}{c}{$\tau=6$} \\
\cmidrule(lr){3-5} \cmidrule(lr){6-8} 
Setup & Method & IPCW & trunc & LM & IPCW & trunc & LM \\
\midrule
A & HAPS & 0.79 & 0.79 & 0.91 & 0.25 & 0.29 & 0.32 \\
DGM1:Cox & HAPS-DR & 0.81 & 0.81 & 0.95 & 0.25 & 0.29 & 0.33 \\
& HAPS-A & \textbf{0.77} & \textbf{0.77} & \textbf{0.90} & 0.25 & 0.29 & 0.32 \\

\midrule
B & HAPS & \textbf{0.78} & \textbf{0.81} & \textbf{0.91} & \textbf{0.32} & \textbf{0.43} & \textbf{0.46} \\
DGM2:DDH & HAPS-DR & 0.82 & 0.84 & 0.95 & 0.46 & 0.61 & 0.66 \\
& HAPS-A & 0.80 & 0.83 & 0.93 & 0.33 & 0.44 & 0.48 \\
\bottomrule
\end{tabular}
}
\end{table}

Compared with HAPS, the advantage of HAPS-A and HAPS-DR is robustness to censoring estimation errors. 
Appendix \ref{app:HAPS_robustness} presents additional simulations to assess this robustness, including settings with misspecified censoring models and settings with flexible machine learning methods to estimate the censoring distribution.
Appendix \ref{app:simu_impact_pred_accuracy} further studies the impact of prediction model accuracy on prediction intervals.
We observe that HAPS can undercover when the censoring 
model is misspecified. HAPS-DR is more robust to censoring model misspecification, although at the cost of more conservative prediction intervals, especially in settings with small sample sizes or poorly trained prediction models. 
When machine learning methods are used for censoring estimation, HAPS shows robust coverage under different censoring mechanisms, especially with larger sample sizes.
In practice, HAPS is often preferred for its simplicity, whereas HAPS-DR is preferred when robustness is the primary concern. Although HAPS-A is also doubly robust, it requires careful estimation of additional nuisance parameters for the augmentation term, which can be computationally demanding.

\section{Applications}\label{sec:application}

We apply the proposed methods to two publicly available benchmark datasets from the \texttt{survival} R package: the colon cancer dataset `\texttt{colon}' and the primary biliary cholangitis (PBC) dataset `\texttt{pbcseq}'. For both datasets, we focus on predicting overall survival time. The colon cancer dataset contains 929 patients and has a censoring rate of approximately 51.3\%. 
The PBC dataset contains 312 patients with primary biliary cholangitis and has a censoring rate of approximately 55.1\%.
Details of the datasets and the covariates included in the analysis are summarized in Appendix~\ref{app:dataset_details}.

\paragraph{Results.}

We construct two-sided prediction intervals using the DDH prediction model and a proportional hazards model for censoring. 
For evaluation, we repeat 200 random splits of each dataset into an 80\% sample for interval construction  and a 20\% test sample. 
Unlike in the simulations, empirical survivor-conditional coverage cannot be computed directly on the test sample because event times are not fully observed under right censoring. 
We therefore report IPCW estimates of survivor-conditional coverage, with details provided in Appendix~\ref{app:justification_IPCW_coverage}. 

Figure~\ref{fig:icmlapp_colon_alpha0.2} summarizes the 80\% prediction intervals among survivors at $\tau=0,3,5$ years for both data sets, with additional 
results for other prediction times provided in Appendix~\ref{app:application}. 
Across the two data applications, the prediction intervals from HAPS and HAPS-DR are similar and attain close to 80\% coverage. 
Their median interval lengths decrease as $\tau$ increases, reflecting the additional information provided by covariate histories observed up to later prediction times.
Table~\ref{tab:app_length_ratio} in the appendix summarizes the ratios of median prediction interval lengths of HAPS and HAPS-DR relative to the baseline methods (`IPCW', `trunc',  `LM'). 
Compared to the baseline methods, 
HAPS and HAPS-DR reduce median interval length by \textbf{9\%--22\%} in the PBC data and \textbf{13\%--60\%} in the colon cancer data for predictions at year 5.

\begin{figure}[!ht]
    \centering

    \begin{subfigure}{0.49\linewidth}
        \centering
        \includegraphics[width=1\linewidth]{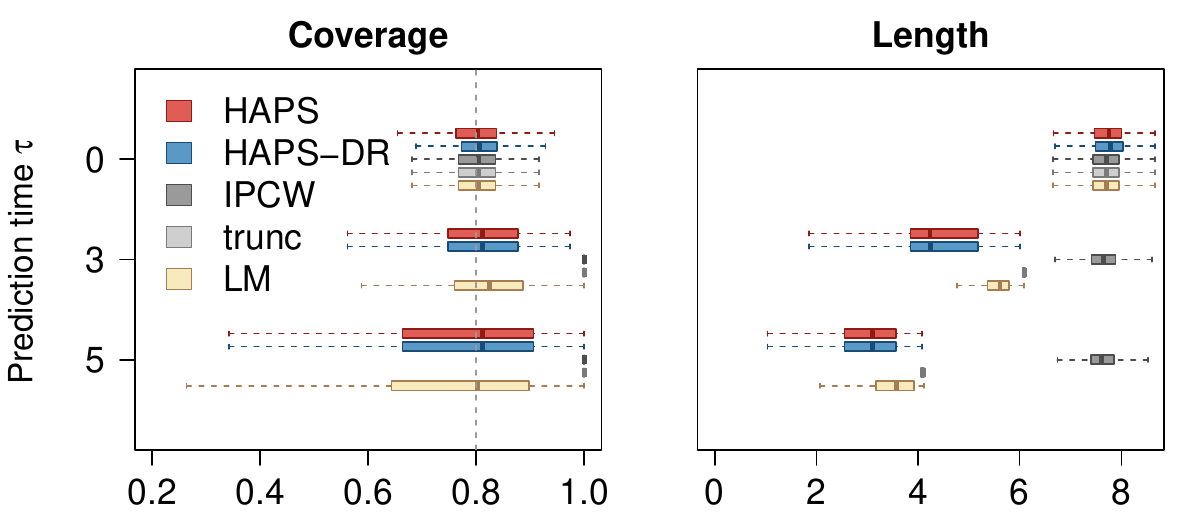}
        \caption{Colon cancer data}
        \label{fig:app_colon_0_3_5}
    \end{subfigure}
    \hfill
    \begin{subfigure}{0.49\linewidth}
        \centering
        \includegraphics[width=1\linewidth]{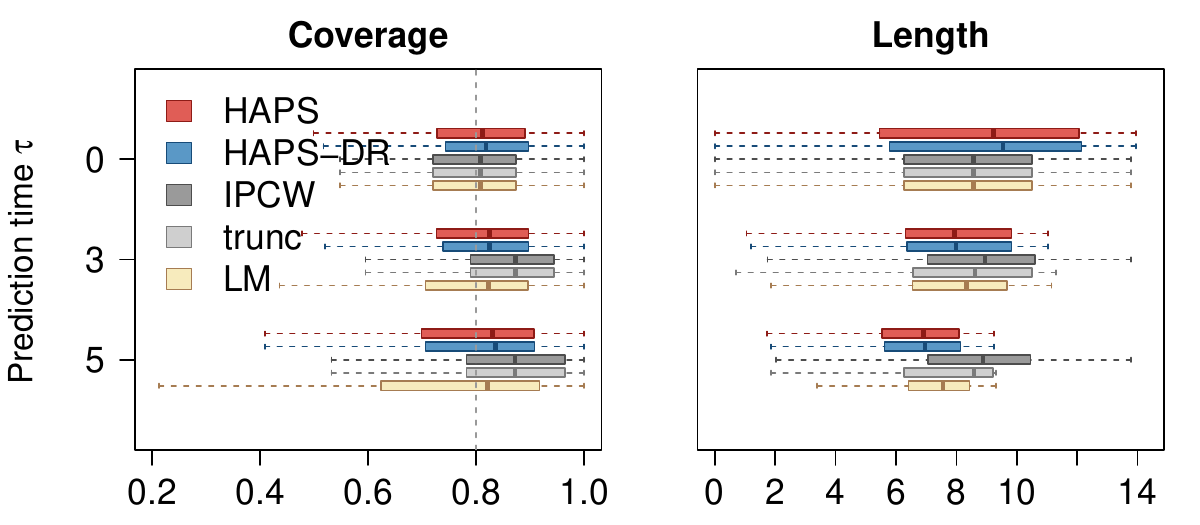}
        \caption{PBC data}
        \label{fig:app_colon_pbc_0_3_5}
    \end{subfigure}

    \caption{Summary of 80\% prediction intervals across 200 random splits for data applications.}
    \label{fig:icmlapp_colon_alpha0.2}
\end{figure}

Computational resources used for the simulation and data-application experiments are summarized in Appendix~\ref{app:compute}.

\section{Discussion}

We develop History-Aware Prediction Sets (HAPS), a conformal framework for constructing prediction intervals for individual event times using covariate histories observed up to a prediction time $\tau$. Across simulations and data applications, HAPS produces substantially shorter prediction intervals than baseline methods while maintaining close-to-nominal coverage, showing its potential to aid decision-making in settings where updated covariate information is collected over time. An interesting future direction is to extend HAPS to competing risks and semi-competing risks settings, leveraging advances in modern dynamic prediction algorithms such as Dynamic-DeepHit \citep{lee2019dynamic}.

\paragraph{Limitations.}
A limitation of HAPS and its extensions is that their validity relies on conditionally independent censoring given the observed time-varying covariate history, as well as positivity. Violations of these assumptions may lead to prediction intervals with invalid coverage. While these conditions are weaker than those imposed by existing conformal methods for right-censored time-to-event outcomes, practitioners should still conduct diagnostics in practice, especially for high-stakes applications. Sensitivity analyses can help to evaluate the impact of potential violations.

An additional limitation of HAPS is its sensitivity to errors in estimating the censoring distribution; misspecification of the censoring model can lead to coverage gaps. To address this limitation, we propose two doubly robust extensions, HAPS-DR and HAPS-A, which improve robustness but introduce additional trade-offs. HAPS-DR can be conservative due to its post-processing step, whereas HAPS-A requires estimating additional nuisance parameters and may be computationally challenging or unstable in small samples or under positivity violations. In practice, HAPS-DR may be preferred when robustness is prioritized, while HAPS may be preferable when simplicity and efficiency are primary concerns.


\paragraph{Broader impacts.}
HAPS provides a model-agnostic framework for uncertainty quantification that can be wrapped around existing prediction algorithms, taking a step toward more reliable use of modern dynamic survival algorithms in high-stakes settings such as clinical care. As shown in our simulations and applications, HAPS and its extensions produce substantially shorter prediction intervals than baseline methods, improving the practical utility of conformal prediction. Potential negative impacts may arise if practitioners over-rely on the resulting intervals without assessing the plausibility of the required assumptions or the sensitivity of conclusions to assumption violations. 



\bibliographystyle{plainnat}
\bibliography{bib/conformal_surrogates, bib/osg}








\newpage
\appendix


\section{Summary of the conformal prediction literature for right censored time-to-event outcomes}\label{app:literature}

Early developments on conformal prediction for time-to-event outcomes considered \emph{type-I censoring}, where each individual’s potential censoring time is always observed regardless of whether censoring occurs. 
Under this setting, \citet{candes2023conformalized} developed calibrated lower prediction bounds by discarding individuals with early censoring times and correcting for the induced covariate shift using weighted conformal prediction \citep{tibshirani2019conformal}. \citet{gui2024conformalized} extended this framework by allowing the censoring-time threshold to be selected in a covariate-dependent and data-adaptive manner, yielding tighter lower bounds while preserving validity. 

Subsequent work has addressed the more common right censoring mechanism, where only the minimum of the event time and censoring time is observed. 
Under this censoring setting, several approaches have been developed for constructing the lower prediction bounds including methods that reduce the problem to the type-I censoring setting through imputation of missing censoring times \citep{sesia2025doubly}, approaches that incorporate weights to account for right censoring \citep{davidov2025conformalized}, as well as methods that leverage semiparametric efficiency theory to handle censoring in doubly robust and efficient manners \citep{farina2025doubly, si2025training}. 

Beyond lower prediction bounds, several recent works developed two-sided prediction intervals for time-to-event outcomes under right censoring. 
\citet{holmes2024two} developed mixed-type prediction intervals using a classification-based approach: individuals are first classified into non-censored and censored groups, after which two-sided prediction intervals are constructed for individuals that are classified as non-censored with enough confidence; lower prediction bounds are constructed otherwise. 
\citet{qin2025conformal} developed a bootstrap-based approach under working models for the conditional event time distribution given covariates. 
\citet{yi2025survival} developed a weighted conformal method that accounts for right censoring using inverse probability of censoring weighting (IPCW), with weights estimated from a localized Kaplan Meier estimator of the conditional censoring distribution given covariates.

\section{General construction of the candidate prediction sets based on conformity scores}\label{app:conformity_score}

To form the candidate class 
$\{\Cc_{\tau,\theta}\}_{\theta\in\Theta}$,
one general strategy is to use a conformity score
$R_\tau(\bar z_\tau, t;\Ac)$ which measures how well a candidate event time $t$ conforms to the prediction based on covariate history $\bar z_\tau$ (smaller values indicating better conformity); see \citet{angelopoulos2024theoretical} for a comprehensive review. Specifically, the candidate class can be formed as
$\Cc_{\tau,\theta}(\bar z_\tau; \Ac) = \{t: R_\tau(\bar z_\tau, t; \Ac)\leq \theta\}$,
for $\theta\in\Theta$, where $\Theta$ contains all possible values of the conformity score.

In practice, to solve for $\hat\theta$ in Algorithm \ref{alg:split_dynamicCP}, one may perform a grid search on a discretized version of  $\Theta$ defined by a grid over the empirical range of the conformity score evaluated on the observed data. In particular, including the maximum observed conformity score in the grid guarantees that the empirical version of \eqref{eq:coverage_Cc_theta} is satisfied for at least one $\theta\in\Theta$, ensuring the existence of a solution.

There is a trade-off between grid resolution and computational cost: finer grids may yield more informative prediction sets but incur higher search cost. In practice, domain knowledge and multi-stage search can be used to restrict the range of $\Theta$ and improve computational efficiency.

\section{Nuisance parameter estimation}

\subsection{Estimation for $G$ with general time-varying covariate histories} \label{app:G_est}

The nuisance parameter $G$ can be estimated using existing regression methods for right censored time-to-event outcomes that accommodate dependent right censoring with time-varying covariates. For example, one can fit a proportional hazards model for censoring, using the \texttt{coxph} function (with the counting process formulation) in the `\texttt{survival}' R package and treating $1-\Delta$ as the event indicator. When stronger assumptions such as independent censoring or conditional independent censoring given baseline covariates hold, simpler estimators such as the Kaplan--Meier estimator or survival regression estimators that incorporate baseline covariates may be used instead.

\subsection{Estimation for $G$ and $h_{\tau}$ with covariates measured at discrete visit times}\label{app:h_est}

As noted in the main paper, estimating the function $h_{\tau}$ with general 
time-varying covariate histories is statistically challenging, because 
$h_{\tau}$ is defined through conditional expectations involving both the event 
time and the covariate history. When the covariate history is observed in 
continuous time, nonparametric estimation of $h_{\tau}$ is generally infeasible 
without imposing strong modeling assumptions on the joint distribution of the 
event time and the covariate process.

Here, we consider the setting where time-varying covariates are measured at 
prespecified times $0 = t_1<\cdots<t_K$. This setting is highly relevant in 
real-world applications where covariates are collected at planned visit 
times. For notational convenience, let $t_{K+1} = \infty$. For each 
$k=1,\ldots,K$, we define the interval-specific contributions of the event time 
and censoring time over $(t_k,t_{k+1}]$ as
\[
T_k = (T\wedge t_{k+1} - t_k)\ind(T>t_k),
\qquad
C_k = (C\wedge t_{k+1} - t_k)\ind(C>t_k).
\]
Let $X_k = \min(T_k,C_k)$ and 
$\Delta_k = \ind(0<T_k<C_k)$. With this notation, the original random variables can be written as
\[
T = \sum_{k=1}^K T_k, 
\qquad 
C = \sum_{k=1}^K C_k, 
\qquad 
X = \sum_{k=1}^K X_k, 
\qquad 
\Delta = \sum_{k=1}^K \Delta_k.
\]
Figure~\ref{fig:Tk_Ck} illustrates the data structure and the definitions of 
$T_k$ and $C_k$.

\vspace{0.5em}
\begin{figure}[ht!]
    \centering
    \includegraphics[width=0.9\linewidth]{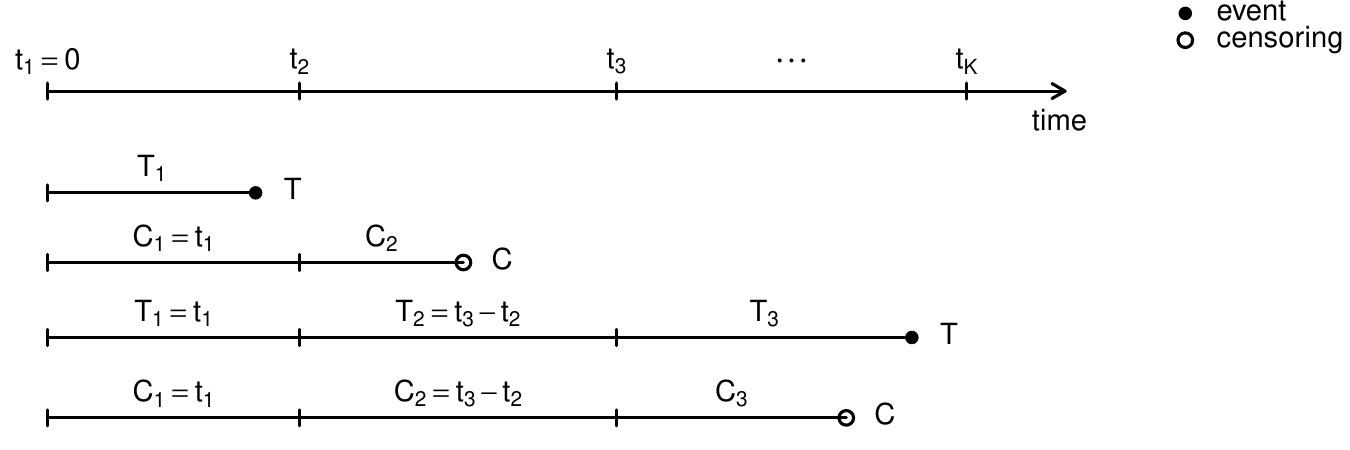}
    \caption{Illustration for the data structure and the definitions of $T_k$ and $C_k$.}
    \label{fig:Tk_Ck}
\end{figure}

Since the time-varying covariate process $Z_t$ changes values only at 
$t_1,\ldots,t_K$, we use the simplified notation
$L_k = Z_{t_k}$ and $\bar L_k = (L_1,\ldots,L_k)$ to denote the covariate
history observed up to time $t_k$. We focus on prediction at 
$\ttau = t_\nu$ for a fixed $\nu \in \{1,\ldots,K\}$. Since for any 
$u\in[t_k,t_{k+1})$, $\bar Z_u$ contains the same information as 
$\bar L_k$, we use these two notations interchangeably in this section and
write $\Cc_{\ttau,\theta}(\bar L_\nu;\Ac)$ for the candidate prediction sets.

Under this setting, Assumption~\ref{ass:cen_ind2} can be equivalently expressed
as follows.

\begin{assumption}[Sequential conditional 
independent censoring]\label{ass:cen_discrete}
    For each $k=1,\ldots,K$,
    \[
    C_k \bigCI (T_k,\ldots,T_K,L_{k+1},\ldots,L_K)
    \mid (T>t_k,\bar L_k).
    \]
\end{assumption}

We assume that $T$ and $C$ are absolutely continuous random variables and
introduce the following notation for their interval-specific conditional
survival functions. For $k=1,\ldots,K$ and $t\in(t_k,\infty)$, denote
\begin{align*}
    S_k(t|\bar L_k) 
    &= \PP(T>t\mid \bar L_k, X>t_k),  \\
    G_k(t|\bar L_k) 
    &= \PP(C>t\mid \bar L_k, X>t_k).
\end{align*}

\subsubsection{Piecewise estimation for $G$.}\label{app:G_est_piecewise}

In addition to the estimation approach described in Appendix~\ref{app:G_est}, which fits a global time-to-event model with time-varying covariates, an alternative approach is to exploit the following factorization of $G$ in terms of the interval-specific censoring survival functions $G_k$:
for $t\in(t_k,t_{k+1}]$,
\begin{align}
    G(t|\bar Z_t) 
    = \left\{\prod_{j=1}^{k-1} G_j(t_{j+1}|\bar L_j) \right\} 
    \cdot G_k(t|\bar L_k), 
    \label{eq:G_Gk}
\end{align}
where the empty product is defined as one. One can then fit separate models for each $G_k(t|\bar L_k)$ on the interval $t\in(t_k,t_{k+1}]$, and estimate $G$ by plugging the resulting estimates of $G_k$ into \eqref{eq:G_Gk}.

The conditional censoring survival curve $G_k(t|\bar L_k)$, for $t\in (t_k,t_{k+1}]$, can be estimated by fitting a survival regression model for the censoring time $C$ conditional on $\bar L_k$, using subjects who remain at risk at $t_k$ and imposing administrative censoring at $t_{k+1}$. 
Specifically, the model is fitted using data
\[
\left\{\left(X_i\wedge t_{k+1}, \tilde\Delta_i, \bar L_{k,i} \right): X_i>t_k\right\},
\]
where 
\[
\tilde\Delta_i = (1-\Delta_i)\ind\{X_i\in(t_k,t_{k+1}]\}
\]
is the event indicator for censoring.
Existing software for right-censored 
time-to-event outcomes can be used to estimate $G_k$, including proportional hazards models implemented in the `\texttt{survival}' R package, random survival forests implemented in the `\texttt{randomForestSRC}' R package, and the boosting trees implemented in the `\texttt{xgboost}' R package, among others.

\subsubsection{Estimation of $h_{\tau}$}

For ease of exposition, we focus on the quantile-based candidate prediction 
sets used in the numerical experiments of this paper:
\[
\Cc_{\tau,\theta}(\bar L_\nu;\Ac)
=
\left(q_{1-\theta}(\bar L_\nu;\Ac), q_{\theta}(\bar L_\nu;\Ac)\right),
\]
where $\theta\in[0.5, 1]$ and $q_{\theta}(\bar L_\nu;\Ac)$ denotes the conditional $\theta$-quantile 
of $T\mid (T>t_\nu,\bar L_\nu)$ computed from the prediction model $\Ac$. 
The same estimation approach can also be extended to other candidate prediction sets.

For any real numbers $a$ and $b$, we use $a\vee b$ and $a\wedge b$ to denote 
the maximum and minimum of $a$ and $b$, respectively. For any 
$u\in(t_k,t_{k+1}]$, we can show that (with the proof provided at the end of this subsection)
\begin{align}
    & \quad h_\ttau(u,\bar Z_u; \theta,\Ac) \nonumber \\
    & =
    \E\left(
    \left.
    \ind(T>\ttau)
    \left[
    \ind\{T\in \Cc_{\ttau,\theta}(\bar L_\nu;\Ac)\} - (1-\alpha)
    \right]
    \right| \bar L_k, T\geq u
    \right) \nonumber \\
    & =
    \left\{ 
    \begin{array}{ll}
        \displaystyle
        \frac{\xi_k(\bar L_k;\theta,\Ac,\alpha)}
        {S_k(u|\bar L_k)},  
        & \text{if } k < \nu, \text{ i.e., } u< \ttau, \\[2ex]
        \displaystyle
        \frac{
        S_k\!\left(u\vee q_{1-\theta}(\bar L_{\nu};\Ac)\mid\bar L_{k}\right)
        -
        S_k\!\left(u\vee q_{\theta}(\bar L_{\nu};\Ac)\mid\bar L_k\right)
        }
        {S_k(u|\bar L_k)}
        - (1-\alpha),  
        & \text{if } k \geq \nu, \text{ i.e., } u\geq \ttau,
    \end{array}
    \right. 
    \label{eq:h_tau}
\end{align}
where
\begin{align*}
    \xi_k(\bar L_k;\theta,\Ac,\alpha) 
    & =
    \E\left(
    \left.
    \ind(T>\ttau)
    \left[
    \ind\{T\in \Cc_{\ttau,\theta}(\bar L_\nu;\Ac)\} - (1-\alpha)
    \right]
    \right| \bar L_k, X\geq t_k
    \right).
\end{align*}

Therefore, it remains to estimate $\xi_k(\bar L_k)$ and 
$S_k(t|\bar L_k)$ for $t\in[t_k,\infty)$ and $k=1,\ldots,K$. The function 
$h_{\ttau}$ can then be estimated by plugging these estimates into 
\eqref{eq:h_tau}.

\paragraph{Estimation of $S_k$.}

We first consider estimating $S_k(t|\bar L_k)$ for $t\in[t_k,\infty)$. 
A natural approach is to fit a survival regression model for $T$ conditional on 
$\bar L_k$ using individuals in the landmark risk set at $t_k$, i.e., those with 
$X>t_k$. However, this approach is generally biased for $k<K$, because censoring 
after $t_k$ may depend on future covariates that are associated with the event 
time. Specifically, under Assumption~\ref{ass:cen_discrete}, $C$ may remain 
dependent on $T$ conditional on $\{\bar L_k, X>t_k\}$ for $k<K$.

To account for this residual dependence between the censoring time and the event 
time, we use inverse probability of censoring weighting. 
Denote $\bar t_{j,t}=t_{j+1}\wedge t$. We use the convention that
\[
G_j(s|\bar L_j)=1 \quad \text{for } s\leq t_j .
\]
Then, for $t>t_k$, define
\[
H_k(t)=\prod_{j=k}^K G_j(\bar t_{j,t}|\bar L_j).
\]

Under the sequential conditional 
independent censoring assumption (Assumption \ref{ass:cen_discrete}), $H_k(t)$ can be interpreted as the conditional 
probability of remaining uncensored through time $t$ in the landmark population 
at $t_k$, given the covariate history up to time $t$:
\[
H_k(t)
=
\PP(C>t\mid \bar Z_t, X>t_k).
\]

Then, for each $k=1,\ldots,K-1$, the survival curve $S_k(t|\bar L_k)$ can be 
estimated by fitting a weighted regression model for the event time $T$ 
conditional on $\bar L_k$ among individuals with $X>t_k$, using the uncensored 
observations with IPCW weights $\Delta/H_k(X)$. Equivalently, one may fit the 
model using observations with $X>t_k$ and $\Delta=1$, weighted by 
$1/H_k(X)$.

For the last interval, no future covariates remain. Thus, under 
Assumption~\ref{ass:cen_discrete}, $S_K(t|\bar L_K)$ for $t>t_K$ can be 
estimated by fitting a standard right-censored survival regression model for 
$T$ conditional on $\bar L_K$ among individuals with $X>t_K$.

In practice, this weighted estimation step can be implemented using existing software for survival outcomes, as long as the chosen method accommodates observation-level weights. For example, weighted Cox proportional hazards models can be fitted using the \texttt{survival} R package, and random survival forests and boosted survival models can be implemented using packages such as \texttt{randomForestSRC} and \texttt{xgboost}, respectively. 
The choice of model may be guided by the effective sample size in each landmark risk set.

\paragraph{Estimation of $\xi_k$.}

If there were no right censoring, $\xi_k$ could be estimated by regressing
\begin{align}
\ind(T>\ttau)
\left[
\ind\{T\in \Cc_{\ttau,\theta}(\bar L_\nu;\Ac)\} - (1-\alpha)
\right] \label{eq:y_tilde}
\end{align}
on $\bar L_k$ among individuals at risk at $t_k$. With right censoring, \eqref{eq:y_tilde} is not always observed, because $T$ and the covariate history $\bar L_\nu$ may be censored. We therefore again use
inverse probability of censoring weights to handle censoring.

Specifically, for each $\theta$, $\xi_k(\bar L_k;\theta,\Ac,\alpha)$ can be
estimated by fitting a weighted regression of the pseudo outcome
\begin{align}
\ind(X>\ttau)
\left[
\ind\{X\in \Cc_{\ttau,\theta}(\bar L_\nu;\Ac)\} - (1-\alpha)
\right] \label{eq:y_tilde_X}
\end{align}
on $\bar L_k$, using uncensored individuals who are at risk at $t_k$, i.e.,
those with $X>t_k$ and $\Delta=1$, with IPCW weights $\Delta/H_k(X)$.
For individuals with observed event times before $\ttau$, the pseudo outcome \eqref{eq:y_tilde_X} is zero.

We observe that the pseudo-outcome in \eqref{eq:y_tilde_X} takes only three 
possible values: $0$, $\alpha$, and $-(1-\alpha)$. Therefore, $\xi_k$ can be 
estimated either by fitting a weighted regression model directly to this 
pseudo-outcome, or by fitting a weighted multiclass classification model for 
the three possible outcome categories and then computing the conditional mean 
using the estimated class probabilities. 
Existing regression or classification software that accommodates 
observation-level weights can be used for estimating $\xi_k$. For example, 
flexible models can be implemented using boosting trees in the 
\texttt{xgboost} R package or random forests in the \texttt{randomForestSRC} 
R package. More generally, any regression or machine learning method that 
supports case weights may be used to estimate the conditional mean function 
$\xi_k(\bar L_k;\theta,\Ac,\alpha)$,
and the choice of model may be guided by the effective sample size in each landmark risk set.

\paragraph{Remark.}
As an alternative to the approach described above, $S_k(t|\cdot)$ at a fixed 
$t\in(t_k,\infty)$ and $\xi_k$ can also be nonparametrically identified and 
estimated using an iterative procedure similar to that of 
\citet{qiu2025multiply}. However, in our setting, we require the entire 
conditional survival curve $S_k(t|\cdot)$ over $t\in(t_k,\infty)$. Applying the 
iterative procedure to this task would be computationally prohibitive, as it 
would require fitting separate models for each time point $t$.

\paragraph{Proof of Equation~\eqref{eq:h_tau}.}

Let
\[
Y_{\theta}
=
\ind(T>\ttau)
\left[
\ind\{T\in \Cc_{\ttau,\theta}(\bar L_\nu;\Ac)\} - (1-\alpha)
\right].
\]
For $u\in(t_k,t_{k+1}]$, we have 
\begin{align*}
    h_\ttau(u,\bar Z_u; \theta,\Ac)
    &=
    \E\left(Y_{\theta}\mid \bar L_k, T\geq u\right)  \\
    &=
    \frac{
    \E\left\{\ind(T\geq u)Y_{\theta}
    \mid \bar L_k, T>t_k\right\}
    }{
    \PP(T\geq u\mid \bar L_k, T>t_k)
    }.
\end{align*}
Under the independent censoring assumption and the absolute continuity of $T$,
conditioning on $T>t_k$ can equivalently be replaced by conditioning on the
observed landmark risk set $X>t_k$. Thus,
\[
\PP(T\geq u\mid \bar L_k,T>t_k)
=
\PP(T\geq u\mid \bar L_k,X>t_k)
=
S_k(u|\bar L_k),
\]
where the distinction between $>$ and $\geq$ is immaterial by absolute
continuity.

(1) First consider $k<\nu$, so that $u<\ttau$. Recall that $\tau = t_{\nu}$. Since $T>\ttau$ implies
$T\geq u$, we have
\begin{align*}
    h_\ttau(u,\bar Z_u; \theta,\Ac)
    &=
    \frac{
    \E\left[
    \ind(T>\ttau)
    \left\{
    \ind\{T\in \Cc_{\ttau,\theta}(\bar L_\nu;\Ac)\}-(1-\alpha)
    \right\}
    \mid \bar L_k, X>t_k
    \right]
    }{
    S_k(u|\bar L_k)
    } \\
    &=
    \frac{\xi_k(\bar L_k;\theta,\Ac,\alpha)}
    {S_k(u|\bar L_k)}.
\end{align*}

Next consider $k\geq \nu$, so that $u\geq \ttau$. In this case,
$\bar L_\nu$ is contained in $\bar L_k$, and hence
$q_{1-\theta}(\bar L_\nu;\Ac)$ and $q_{\theta}(\bar L_\nu;\Ac)$ are known
conditional on $\bar L_k$ and given $\Ac$. For the quantile-based candidate set
\[
\Cc_{\ttau,\theta}(\bar L_\nu;\Ac)
=
\left(q_{1-\theta}(\bar L_\nu;\Ac),
q_{\theta}(\bar L_\nu;\Ac)\right), \quad \theta\in[0.5, 1], 
\]
we have
\begin{align*}
    h_\ttau(u,\bar Z_u; \theta,\Ac)
    &=
    \frac{
    \E\left[
    \ind(T\geq u)
    \left\{
    \ind\left(q_{1-\theta}(\bar L_\nu;\Ac)<T<
    q_{\theta}(\bar L_\nu;\Ac)\right)
    -(1-\alpha)
    \right\}
    \mid \bar L_k, X>t_k
    \right]
    }{
    S_k(u|\bar L_k)
    } \\
    &=
    \frac{
    \PP\left(
    T>u\vee q_{1-\theta}(\bar L_\nu;\Ac),
    \,
    T<q_{\theta}(\bar L_\nu;\Ac)
    \mid \bar L_k, X>t_k
    \right)
    }{
    S_k(u|\bar L_k)
    }
    -(1-\alpha) \\
    &=
    \frac{
    S_k\left(u\vee q_{1-\theta}(\bar L_\nu;\Ac)\mid \bar L_k\right)
    -
    S_k\left(u\vee q_{\theta}(\bar L_\nu;\Ac)\mid \bar L_k\right)
    }{
    S_k(u|\bar L_k)
    }
    -(1-\alpha).
\end{align*}

Combining the two cases, Equation \eqref{eq:h_tau} follows.

\section{Proof for identification of $\theta^*$}\label{app:proof_identification}

\begin{proof}[Proof of Lemma \ref{lem:identification_IPCW}]
    Recall $G(t|\bar Z_t) = \exp\!\left\{ - \int_0^t \lambda_C(u \mid \bar Z_u)\,du \right\}$.
    Under assumption \ref{ass:cen_ind2}, we have
    \begin{align}
    G(t|\bar Z_t) 
    = \exp\!\left\{ - \int_0^t \lambda_C(u \mid T, \bar Z_T)\,du \right\}
    = \PP(C>t \mid T, \bar Z_T), \quad t>0. \label{eq:G_Sc}
    \end{align}
    Recall $\Delta = \ind(T<C)$ and $X = \min(T,C)$. When $\Delta = 1$, we have $X = T$ and $\bar Z_X = \bar Z_T$. Therefore,
    \begin{align*}
        U(\theta;G,\Ac) 
        & = \frac{\Delta \ind(X >\tau)}{G(X|\bar Z_X)} \left[\ind\{X \in \Cc_{\tau,\theta}(\bar Z_\tau; \Ac)\} - (1-\alpha) \right] \\
        & = \frac{\Delta \ind(T >\tau)}{G(T|\bar Z_T)} \left[\ind\{T \in \Cc_{\tau,\theta}(\bar Z_\tau; \Ac)\} - (1-\alpha) \right] \\
        & = \frac{\ind(T<C)\ind(T >\tau)}{G(T|\bar Z_T)} \left[\ind\{T \in \Cc_{\tau,\theta}(\bar Z_\tau; \Ac)\} - (1-\alpha) \right]
    \end{align*}
    By tower property of expectations,
    \begin{align}
        \E\left\{ U(\theta;G,\Ac) \right\} 
        & =  \E\left(\frac{\ind(T<C)\ind(T >\tau)}{G(T|\bar Z_T)} \left[\ind\{T \in \Cc_{\tau,\theta}(\bar Z_\tau; \Ac)\} - (1-\alpha) \right] \right) \nonumber \\
        & =  \E\left\{\E\left(\left. \frac{\ind(T<C)\ind(T >\tau)}{G(T|\bar Z_T)} \left[\ind\{T \in \Cc_{\tau,\theta}(\bar Z_\tau; \Ac)\} - (1-\alpha) \right] \right| T,\bar Z_T \right)\right\} \nonumber\\
        & =  \E\left(\frac{\E\left\{\left.\ind(T<C) ~\right|~ T,\bar Z_T\right\} }{G(T|\bar Z_T)}  \cdot \ind(T >\tau) \left[\ind\{T \in \Cc_{\tau,\theta}(\bar Z_\tau; \Ac)\} - (1-\alpha) \right] \right) \nonumber \\
        & =  \E\left(\frac{\PP\left(\left.C>T ~\right|~ T,\bar Z_T\right) }{G(T|\bar Z_T)}  \cdot \ind(T >\tau) \left[\ind\{T \in \Cc_{\tau,\theta}(\bar Z_\tau; \Ac)\} - (1-\alpha) \right] \right) \nonumber \\
        & = \E\left(\ind(T >\tau) \left[\ind\{T \in \Cc_{\tau,\theta}(\bar Z_\tau; \Ac)\} - (1-\alpha) \right] \right) \label{eq:identification_proof1} \\
        & = \E\{U^*(\theta;\Ac)\} \nonumber,
    \end{align}
    where \eqref{eq:identification_proof1}  holds by \eqref{eq:G_Sc}. 
    
    Therefore, 
    \[
    \theta^* = \inf\{\theta\in\Theta:\E\{U^*(\theta;\Ac)\}\geq 0\}
    =  \inf\{\theta\in\Theta:\E\{U(\theta;G,\Ac)\}\geq 0\}. 
    \]
    
\end{proof}

\section{Proof for coverage guarantee}\label{app:proof_coverage}

We first prove the following two lemmas, which will be used to prove the coverage guarantee in Theorem \ref{thm:IPCW_coverage}. 

Recall that $\Dc$ denotes the observed sample, which is split into two disjoint sets $\Dctr$ and $\Dccal$. The set $\Dctr$ is used to fit the prediction model $\Ac$ and estimate the nuisance parameter $G$; the set $\Dccal$ is used to obtain $\hat\theta$. 

\begin{lemma}\label{lem:thm_IPCW_proof1}
     Under Assumptions \ref{ass:cen_ind2} and \ref{ass:strict_positivity}, for any $\epsilon \in(0,1)$, there exists a constant $\consta>0$ such that with probability at least $1-\epsilon$,
    \begin{align*}
        \sup_{\theta} \left| \frac{1}{|\Dccal|} \sum_{i\in \Iccal} U_{i}(\theta;\hat G,\hat\Ac)  - \E\left\{\left.  U(\theta;\hat G,\hat\Ac) \right| \Dc \right\} \right|
        \leq \eta^{-1} \left( \sqrt{\frac{1}{2}\log \frac{1}{\epsilon}} + \consta \right)\frac{1}{\sqrt{|\Dccal|}}.
    \end{align*}
\end{lemma}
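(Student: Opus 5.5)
Conditional on $\Dctr$, the fitted objects $\hat G$ and $\hat\Ac$ (and hence the nested family $\{\Cc_{\tau,\theta}(\cdot;\hat\Ac)\}_{\theta\in\Theta}$) are fixed. The calibration observations $\{O_i\}_{i\in\Iccal}$ are then i.i.d. and independent of these fitted objects. So the quantity inside the supremum is the deviation of an empirical process indexed by $\theta$ from its mean, with the mean being $\E\{U(\theta;\hat G,\hat\Ac)\mid\Dc\}$, i.e.\ the expectation over a fresh $O$ with $(\hat G,\hat\Ac)$ held fixed. I will prove the bound conditionally on $\Dctr$, uniformly in $\Dctr$, and then integrate to get the unconditional statement with probability at least $1-\epsilon$.

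\textbf{Step 1 (boundedness).} Write $U_i(\theta)=W_i\,[\ind\{X_i\in\Cc_{\tau,\theta}(\bar Z_{\tau,i};\hat\Ac)\}-(1-\alpha)]$ with $W_i=\Delta_i\ind(X_i>\tau)/\hat G(X_i\mid\bar Z_{X_i,i})$. Using the bound $\hat G>\eta$ on observed data points (the hypothesis used in Theorem~\ref{thm:IPCW_coverage}, which I carry over here), we get $0\le W_i\le\eta^{-1}$. The bracket lies in $[-(1-\alpha),\alpha]$, an interval of length $1$. Hence each $U_i(\theta)$ takes values in an interval of length at most $\eta^{-1}$. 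This is what produces the factor $\eta^{-1}$ in the bound.

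\textbf{Step 2 (concentration of the supremum).} Let $F=\sup_\theta|P_nU(\theta)-PU(\theta)|$. Changing one calibration point changes $F$ by at most $\eta^{-1}/|\Dccal|$. McDiarmid's bounded-difference inequality then gives, with probability at least $1-\epsilon$,
\[
F\le \E F+\eta^{-1}\sqrt{\tfrac12\log(1/\epsilon)}\,|\Dccal|^{-1/2}.
\]

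\textbf{Step 3 (bounding $\E F$; main obstacle).} I need $\E F\le \consta\,\eta^{-1}|\Dccal|^{-1/2}$ with $\consta$ not depending on $\hat G$ or $\hat\Ac$. I would symmetrize, so that $\E F\le 2\,\mathcal R_n$, the Rademacher complexity of $\{U(\theta)\}_\theta$. Next I factor out the envelope $W\le\eta^{-1}$. Because the sets are nested and increasing in $\theta$, the family $\theta\mapsto\ind\{X\in\Cc_{\tau,\theta}\}$ is a chain: its subgraph class has VC dimension $1$ (a total order). The class $\{U(\theta)\}$ is therefore $W$ times a monotone one-parameter family plus a constant multiple of $W$, and its uniform entropy is bounded by that of a VC-subgraph class with envelope $\eta^{-1}$. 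Dudley's entropy integral then gives $\mathcal R_n\le c\,\eta^{-1}|\Dccal|^{-1/2}$ with a universal constant $c$. Equivalently, one can discretize $\theta$ at the $\le|\Dccal|+1$ distinct inclusion patterns and apply a DKW-type chaining argument. Setting $\consta=2c$ and combining with Step 2 gives the stated bound conditionally on $\Dctr$. Since the bound does not depend on $\Dctr$, taking expectation over $\Dctr$ preserves probability $1-\epsilon$.

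The delicate point is making sure the weighting does not break the VC argument. Writing the product of a fixed nonnegative function with a VC chain class handles this, because a class multiplied by a fixed function keeps its covering numbers in the weighted $L_2$ norm.

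Assumptions~\ref{ass:cen_ind2} and~\ref{ass:strict_positivity} are not needed beyond guaranteeing that the weights are well defined.
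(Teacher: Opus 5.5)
Your proof is correct and follows the same route as the paper. You use McDiarmid's bounded-difference inequality, with increments $\eta^{-1}/|\Dccal|$, to concentrate the supremum around its mean; this relies on $\hat G>\eta$ at the calibration points, a hypothesis the lemma's statement leaves implicit but the paper's argument also needs. You then add an $O(\eta^{-1}|\Dccal|^{-1/2})$ bound on the expected supremum from the nested, VC-dimension-one structure of the candidate sets. The only difference is that the paper imports that expected-supremum bound from Lemma 4 of Yang et al.\ (2024), following Lemma S.2.1 of Farina et al., whereas you derive it yourself via symmetrization and a uniform-entropy/chaining argument.
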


\begin{proof}[Proof of Lemma \ref{lem:thm_IPCW_proof1}]
    The conclusion follows by applying McDiarmid's inequality for empirical processes and Lemma 4 of \citet{yang2024doubly}, using similar arguments as the proof of Lemma S.2.1 in the supplement of \citet{farina2025doubly}. 
\end{proof}

\begin{lemma}\label{lem:thm_IPCW_proof2}
     Under Assumptions \ref{ass:cen_ind2} and \ref{ass:strict_positivity},
     \begin{align*}
         \sup_{\theta} \left| \E\left\{\left.  U(\theta;\hat G,\hat\Ac) \right| \Dc \right\} - \E\left\{\left.  U(\theta;G,\hat\Ac) \right| \Dc \right\} \right|
         \leq \eta^{-2} \|\hat G-G\|_2. 
     \end{align*}
\end{lemma}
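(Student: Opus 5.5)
The plan is to bound the difference pointwise in $\theta$ and then observe that the bound does not depend on $\theta$, so it also controls the supremum. Conditional on $\Dc$, the prediction model $\hat\Ac$ and the estimate $\hat G$ are fixed functions, since both are fit on $\Dctr$. The two estimating functions differ only in the weight, so
\[
U(\theta;\hat G,\hat\Ac)-U(\theta;G,\hat\Ac)
= \Delta\ind(X>\tau)\left[\ind\{X\in\Cc_{\tau,\theta}(\bar Z_\tau;\hat\Ac)\}-(1-\alpha)\right]\left\{\frac{1}{\hat G(X|\bar Z_X)}-\frac{1}{G(X|\bar Z_X)}\right\}.
\]
The bracketed factor lies in $[-(1-\alpha),\alpha]$, so its absolute value is at most $1$. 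Likewise $\Delta\ind(X>\tau)\le 1$. This gives the pointwise bound
\[
\bigl|U(\theta;\hat G,\hat\Ac)-U(\theta;G,\hat\Ac)\bigr|\le \frac{|\hat G(X|\bar Z_X)-G(X|\bar Z_X)|}{\hat G(X|\bar Z_X)\,G(X|\bar Z_X)},
\]
which holds uniformly in $\theta$.

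Next I would lower bound the denominator by $\eta^2$. For $\hat G$, I will use the condition $\hat G(X|\bar Z_X)>\eta$ almost surely, the same condition imposed in Theorem~\ref{thm:IPCW_coverage}. For $G$, the difference is multiplied by $\Delta$, so only the event $\Delta=1$ matters. On that event $X=T$, and by \eqref{eq:G_Sc} in the proof of Lemma~\ref{lem:identification_IPCW} (which uses Assumption~\ref{ass:cen_ind2}), $G(T|\bar Z_T)=\PP(C>T\mid T,\bar Z_T)$. Assumption~\ref{ass:strict_positivity}(ii) then gives $G(T|\bar Z_T)>\eta$. Putting these together, the integrand is bounded by $\eta^{-2}|\hat G(X|\bar Z_X)-G(X|\bar Z_X)|$.

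Finally, I would take the conditional expectation given $\Dc$ over a fresh draw of $O$, and apply Jensen's (or Cauchy--Schwarz's) inequality $\E|Y|\le(\E Y^2)^{1/2}$:
\[
\sup_\theta\bigl|\E\{U(\theta;\hat G,\hat\Ac)-U(\theta;G,\hat\Ac)\mid\Dc\}\bigr|\le \eta^{-2}\,\E\bigl\{|\hat G-G|(X|\bar Z_X)\mid\Dc\bigr\}\le\eta^{-2}\|\hat G-G\|_2.
\]

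The only delicate step is the lower bound on $G$. Positivity is stated for $\PP(C>T\mid T,\bar Z_T)$, not for $G(X|\bar Z_X)$ at arbitrary censored $X$, so the argument has to pass through the factor $\Delta$ and the identity \eqref{eq:G_Sc}. I would also need to state explicitly that the lemma inherits the almost-sure bound $\hat G>\eta$ from the theorem's hypotheses.
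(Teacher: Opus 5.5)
Your proposal is correct and follows essentially the same route as the paper, which bounds the weight difference by $\eta^{-2}|\hat G - G|$ using positivity and then applies Jensen's and Cauchy--Schwarz inequalities (deferring the details to Lemma S.2.2 of \citet{farina2025doubly}). Two points in your write-up fill in details the paper leaves implicit: you obtain the lower bound on $G$ through the factor $\Delta$ and \eqref{eq:G_Sc}, and you note that the bound $\hat G > \eta$ must be imported from the hypotheses of Theorem~\ref{thm:IPCW_coverage}, since it is not among the lemma's stated assumptions.
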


\begin{proof}[Proof of Lemma \ref{lem:thm_IPCW_proof2}]
    The conclusion follows by applying Jensen's inequality and Cauchy-Schwarz inequality, using similar arguments as the proof of Lemma S.2.2 in the supplement of \citet{farina2025doubly}.
    
\end{proof}

\begin{proof}[Proof of Theorem \ref{thm:IPCW_coverage}]

    Recall from Section \ref{sec:preliminary} that $(T_{n+1}, \bar Z_{\tau, n+1})$ are from the new test individual which is independent from $\Dc$. 
    Recall from \eqref{eq:D(theta)} the definition of $U^*$, and from Algorithm \ref{alg:split_dynamicCP} that $\hat \Cc(\bar Z_{\tau,n+1}) = \Cc_{\tau, \hat\theta}(\bar Z_{\tau,n+1};\hat\Ac)$. 
    Since $\hat\Ac$ and $\hat\theta$ are estimated from $\Dc$, they are fixed after conditional on $\Dc$.
    Therefore, by Lemma \ref{lem:identification_IPCW}, 
    \begin{align}
        &\quad \E\left\{\left. U_{ n+1}(\hat\theta;G,\hat A) \right| \Dc \right\} \\
        & = \E\left\{\left. U^*_{n+1}(\hat\theta;\hat A) \right| \Dc \right\} \nonumber \\
        & = \E\left(\left. \ind(T_{n+1} > \tau) \cdot \left[ \ind\{T_{n+1}\in \hat\Cc(\bar Z_{\tau,n+1})\} - (1-\alpha) \right] \right| \Dc\right) \nonumber \\
        & = \PP\left\{\left. T_{n+1}>\tau,~ T_{n+1}\in \hat\Cc(\bar Z_{\tau,n+1}) \right| \Dc \right\} - (1-\alpha)\cdot \PP(T_{n+1}>\tau) \nonumber \\
        & = \PP\left\{\left. T_{n+1}\in \hat\Cc(\bar Z_{\tau,n+1}) \right| T_{n+1}>\tau,\Dc \right\}\cdot \PP(T_{n+1}>\tau) - (1-\alpha)\cdot \PP(T_{n+1}>\tau) \nonumber \\
        & = \left[ \PP\left\{\left. T_{n+1}\in \hat\Cc(\bar Z_{\tau,n+1}) \right| T_{n+1}>\tau, \Dc \right\} - (1-\alpha) \right] \cdot \sttau. \label{eq:coverage_proof_E1}
    \end{align}
    On the other hand, 
    \begin{align}
        &\quad \E\left\{\left. U_{ n+1}(\hat\theta;G,\hat A) \right| \Dc \right\} \nonumber\\
        & = \frac{1}{|\Dccal|}\sum_{i\in\Iccal}  U_{i}(\hat\theta;\hat G,\hat\Ac)  - \left[\frac{1}{|\Dccal|}\sum_{i\in\Iccal}  U_{i}(\hat\theta;\hat G,\hat\Ac)  - \E\left\{\left. U_{ n+1}(\hat\theta;G,\hat A) \right| \Dc \right\} \right] \nonumber\\
        & \geq \frac{1}{|\Dccal|} \sum_{i\in\Iccal}  U_{i}(\hat\theta;\hat G,\hat\Ac) 
        - \left| \frac{1}{|\Dccal|}\sum_{i\in\Iccal}  U_{i}(\hat\theta;\hat G,\hat\Ac) - \E\left\{\left. U_{ n+1}(\hat\theta;G,\hat A) \right| \Dc \right\} \right| \nonumber \\
        &\geq - \sup_{\theta}\left| \frac{1}{|\Dccal|}\sum_{i\in\Iccal}  U_{i}(\theta;\hat G,\hat\Ac) - \E\left\{\left. U_{ n+1}(\theta;G,\hat A) \right| \Dc \right\} \right|,  \label{eq:coverage_proof_E2}
    \end{align}
    where \eqref{eq:coverage_proof_E2} holds because $ \sum_{i\in\Iccal}  U_{i}(\hat\theta;\hat G,\hat\Ac) \geq 0$ (by the definition of $\hat\theta$ in Algorithm \ref{alg:split_dynamicCP}).
    
    Combining \eqref{eq:coverage_proof_E1} and \eqref{eq:coverage_proof_E2}, we have 
    \begin{align}
         &\quad \PP\left\{\left. T_{n+1}\in \hat\Cc(\bar Z_{\tau,n+1}) \right| T_{n+1}>\tau \right\} \nonumber\\
         & \geq 1-\alpha - \sttau^{-1}\cdot \sup_{\theta}\left| \frac{1}{|\Dccal|}\sum_{i\in\Iccal}  U_{i}(\hat\theta;\hat G,\hat\Ac) - \E\left\{\left. U_{ n+1}(\hat\theta;G,\hat A) \right| \Dc \right\} \right|. \label{eq:coverage_proof_3}
    \end{align}
    
    We now bound the above supremum term. 
    By Lemma \ref{lem:thm_IPCW_proof1} and Lemma \ref{lem:thm_IPCW_proof2}, we have that for any $\epsilon\in(0,1)$, there exists a constant $\consta>0$ such that with probability at least $1-\epsilon$,
    \begin{align}
        &\quad \sup_\theta\left| \frac{1}{|\Dccal|}\sum_{i\in\Iccal}  U_{i}(\theta;\hat G,\hat\Ac) - \E\{U(\theta; G,\hat\Ac) \mid \Dc\} \right| \nonumber \\
        & \leq \sup_\theta\left| \frac{1}{|\Dccal|}\sum_{i\in\Iccal}  U_{i}(\theta; \hat G,\hat\Ac) - \E\{U(\theta; \hat G,\hat\Ac)\mid \Dc\} \right| \nonumber\\
        &\quad + \sup_\theta\left|\E\{U(\theta; \hat G,\hat\Ac)\mid\Dc\} - \E\{U(\theta;G,\hat\Ac)\mid\Dc\} \right| \nonumber\\
        &\leq \eta^{-1} \left( \sqrt{\frac{1}{2}\log \frac{1}{\epsilon}} + \consta \right)\frac{1}{\sqrt{|\Dccal|}} 
        + \eta^{-2} \|\hat G-G\|_2.  \label{eq:coverage_proof_4}
    \end{align}
    Combining \eqref{eq:coverage_proof_3} and \eqref{eq:coverage_proof_4} concludes the proof.
\end{proof}

\newpage
\section{Additional details and results for simulation}\label{app:simu}

\subsection{Data generating mechanisms}\label{app:simu_DGM}

\subsubsection{DGM1}
We generate a covariate process that is piecewise constant over prespecified time
intervals defined by
$0 = t_1 < \cdots < t_K < t_{K+1} = \infty$, where $K \ge 2$.
Specifically, $Z_t = Z_{t_k}$ for all $t \in [t_k, t_{k+1})$, $k = 1, \ldots, K$.
The covariate values $\{Z_{t_k}\}_{k=1}^K$ are generated from an autoregressive model
of order 1:
\begin{align*}
    Z_{t_{k+1}} = \rho Z_{t_k} + \sqrt{1 - \rho^2}\,\epsilon_k,
    \qquad k = 1, \ldots, K,
\end{align*}
where $Z_0 \sim N(0,1)$, $\epsilon_1,\ldots,\epsilon_K \stackrel{\text{i.i.d.}}{\sim} N(0,1)$
are independent of $Z_0$, and $\rho \in [-1,1]$ controls the temporal dependence of the
covariate process.
In particular, $\rho = \mathrm{Corr}(Z_{t_k}, Z_{t_{k-1}})$.
The setting with $\rho = 0$ corresponds to i.i.d.\ covariates $Z_{t_k} \sim N(0,1)$,
whereas $\rho = 1$ corresponds to time-invariant covariates with
$Z_{t_k} = Z_0$ for all $k \in \{1,\ldots,K\}$.

Conditional on the covariate history $\bar Z_{t_k}$, the event time increment $T_k$
and the censoring time increment $C_k$ for the interval $[t_k, t_{k+1})$ are generated independently from Weibull distributions:
\begin{align*}
    T_k &\sim \mathrm{Weibull}\!\left(a_{T,k}, \sigma_{T,k}(\bar Z_{t_k})\right), \\
    C_k &\sim \mathrm{Weibull}\!\left(a_{C,k}, \sigma_{C,k}(\bar Z_{t_k})\right),
\end{align*}
where $a_{T,k}, a_{C,k} > 0$ are shape parameters, and the scale parameters are given by
\begin{align}
    \sigma_{T,k}(\bar Z_{t_k})
    & = \exp\!\left\{
        - \frac{\beta_{T,0}^{(k)} + \beta_{T}^{(k)} Z_{t_k}}{a_{T,k}}
      \right\}, \label{eq:sigma_T} \\
    \sigma_{C,k}(\bar Z_{t_k})
    & = \exp\!\left\{
        - \frac{\beta_{C,0}^{(k)} + \beta_{C}^{(k)} Z_{t_k}}{a_{C,k}}
      \right\}. \label{eq:sigma_C}
\end{align}
Let $\zeta_1 = 1$ and $\zeta_k = \zeta_{k-1}\cdot \ind(T_{k-1} > t_{k} - t_{k-1})$ for $k = 2,...,K$ denote the indicators for whether the individual survived to $t_k$.
The latent event time $T$ and censoring time $C$ are obtained by: 
\[
T = \sum_{k=1}^K \zeta_k \cdot \min(T_k, ~ {t_{k+1} - t_{k}}), \quad
C = \sum_{k=1}^K \zeta_k \cdot \min(C_k, ~{t_{k+1} - t_{k}}).
\]

{Note that \eqref{eq:sigma_T} and \eqref{eq:sigma_C} are used only for illustration, under which $T_k$ and $C_k$ depend only on the most recent covariates $Z_{t_k}$, resulting in a Markov data-generating mechanism. Our framework does not rely on the Markov assumption and applies more generally to settings where $T$ and $C$ may depend on the entire covariate history.}

Denote $\alpha_* = (\alpha_*^{(1)}, \ldots, \alpha_*^{(K)})$ and
$\beta_* = (\beta_*^{(1)}, \ldots, \beta_*^{(K)})$, where $*$ is a place holder. 
We take $K = 3$, $(t_2, t_3) = (3,6)$, $\rho = 0.3$, 
and 
\[
\alpha_T = (4,\,4,\,5), \quad
\beta_{T,0} = (-8,\,-8,\,-5), \quad
\beta_T = (1,\,2,\,3),
\]
\[
\alpha_C = (3,\,3,\,3), \quad
\beta_{C,0} = (-6,\,-6,\,-5), \quad
\beta_C = (2,\,2,\,2).
\]
The resulting censoring rate is approximately $36.2\%$.
The survival probabilities are approximately
$\PP(T>3)=95.8\%$ and $\PP(T>6)=86.3\%$, and the corresponding at-risk
probabilities are approximately
$\PP(X>3)=79.6\%$ and $\PP(X>6)=63.9\%$.


\subsubsection{DGM2}

DGM2 uses the same piecewise-constant time-varying covariate process as DGM1.
$[t_k,t_{k+1})$. 
In addition, we generate a baseline binary covariate
$B \sim \mathrm{Bernoulli}(0.5)$, independent of the time-varying covariate process.

Conditional on $(B,\bar Z_{t_k})$, the event time is generated
interval-by-interval from Weibull distributions
\[
T_k \sim \mathrm{Weibull}\{a_{T,k}, \sigma_{T,k}(B,Z_{t_k})\},
\]
where
\[
\sigma_{T,k}(B,Z_{t_k})
=
\exp\left\{
-\frac{
\beta_{T,0}^{(k)} + \beta_{TL}^{(k)} Z_{t_k} + \beta_{TB}^{(k)} B
}{a_{T,k}}
\right\}.
\]

The censoring mechanism is a subgroup mixture. For subjects with $B=0$, the
latent censoring time is generated as
\[
C \sim \mathrm{Unif}(0,15).
\]
For subjects with $B=1$, censoring is generated interval-by-interval from
Weibull waiting times
\[
C_k \sim \mathrm{Weibull}\{a_{C,k}, \sigma_{C,k}(Z_{t_k})\},
\]
where
\[
\sigma_{C,k}(Z_{t_k})
=
\exp\left\{
-\frac{
\beta_{C,0}^{(k)}
+ \beta_{CL}^{(k)} Z_{t_k}
+ \beta_{CL2}^{(k)} Z_{t_k}^2
}{a_{C,k}}
\right\}.
\]

We take
\[
a_T = (4,4,5), \quad
\beta_{T,0} = (-8,-8,-5), \quad
\beta_{TL} = (1,2,3), \quad
\beta_{TB} = (0.7,0.9,1.1),
\]
\[
a_C = (3,3,3), \quad
\beta_{C,0} = (-8.0,-7.5,-7.0), \quad
\beta_{CL} = (0.5,0.6,0.7), \quad
\beta_{CL2} = (0.4,0.5,0.6).
\]
The resulting censoring rate is approximately $32.1\%$.
The survival probabilities are approximately
$\PP(T>3)=93.9\%$ and $\PP(T>6)=81.1\%$, and the corresponding at-risk
probabilities are approximately
$\PP(X>3)=83.5\%$ and $\PP(X>6)=62.3\%$.

Under this data generating mechanism, the Cox model is misspecified for the censoring mechanism:
for $B=0$ the censoring time follows a uniform distribution, whereas for $B=1$ the
censoring hazard depends nonlinearly on $Z_{t_k}$ through both $Z_{t_k}$ and $Z_{t_k}^2$.

\subsubsection{DGM3}

DGM3 uses the same structure as DGM2 except with different parameters for generating the baseline covariate and the censoring time. 
The baseline covariate $B \sim \mathrm{Bernoulli}(0.4)$, and we take coefficients
\[
a_C = (1,1,1), \quad
\beta_{C,0} = (-2.8,-2.6,-2.4), \quad
\beta_{CL} = (-0.8,-1,-1.2), \quad
\beta_{CL2} = (0,0,0).
\]
The resulting censoring rate is approximately $53.7\%$.
The survival probabilities are approximately
$\PP(T>3)=94.2\%$ and $\PP(T>6)=82.1\%$, and the corresponding at-risk
probabilities are approximately
$\PP(X>3)=74.9\%$ and $\PP(X>6)=47.8\%$.

\newpage
\subsection{Details of the baseline IPCW method}\label{app:IPCW0}

The baseline IPCW method is an extension of the IPCW approach in \citet{farina2025doubly} to constructing two-sided prediction intervals and can also be viewed as a generalization of the weighting approach in \citet{yi2025survival} by allowing arbitrary conformity scores and flexible estimation of the censoring weights.
Algorithm \ref{alg:split_IPCW0} provides the details of this baseline.

\begin{algorithm}[h!]
\caption{Baseline IPCW method}
\label{alg:split_IPCW0}
\begin{algorithmic}
\STATE \textbf{Input:} observed data $\Dc = \{O_i\}_{i=1}^n$; coverage level $1-\alpha$; prediction model for the conditional distribution of $T|Z_0$; censoring model for the conditional distribution of $C|Z_0$; baseline covariates $Z_{0,n+1}$ from a new test individual.
\begin{enumerate}
    \item Randomly split $\Dc$ into a training set $\Dctr$ and a calibration set $\Dccal$ with almost equal size.
    
    \item Fit the prediction model on $\Dctr$, and let $\tilde q_\theta(Z_{0,n+1})$ denote the estimated conditional $\theta$-quantile function of $T_{n+1}$ given $Z_{0,n+1}$. Consider the following class of candidate intervals:  
    \begin{align*}
        \left\{ \big(\tilde q_{1-\theta}(Z_{0,n+1}),~ \tilde q_{\theta}(Z_{0,n+1}) \big): ~ \theta\in\Theta \right\}, 
    \end{align*}
    where $\Theta$ is taken as a grid on $[0.5,1]$ with increments 0.01.
    
    \item Fit the censoring model on $\Dctr$, and let $\hat S_c(t|z_0)$ denote the estimator for $\PP(C>t\mid Z_0 = z_0)$. \\
    
    \item Using a grid search over $\Theta$, compute 
    \begin{align}
        \tilde\theta
        = \inf\left\{ \theta\in\Theta: \frac{1}{|\Dccal|}\sum_{i \in \Dccal} \frac{\Delta_i \ind\left\{\tilde q_{1 - \theta}(Z_{0,i}) \leq T_i \leq \tilde q_{\theta}(Z_{0,i}) \right\} }{\hat S_c(X_i|Z_{0,i})}\geq 1-\alpha \right\}. \label{eq:theta_tilde_IPCW0}
    \end{align}
\end{enumerate}
\STATE \textbf{Output:} prediction interval $\left(\tilde q_{1-\tilde\theta}(Z_{0,n+1}), ~ \tilde q_{\tilde\theta}(Z_{0,n+1})\right)$.
\end{algorithmic}
\end{algorithm}

\subsection{Implementation details for simulation}\label{app:simu_prediction_model}

\subsubsection{Prediction models}\label{app:pred-models}

\paragraph{Landmarking prediction models.} At each prediction time $\tau$, prediction models for $T\mid \bar Z_\tau, T>\tau$ are fitted using the landmarking approach, in which we consider the following models:
\begin{itemize}
    \item `Cox': proportional hazards model fitted using the \texttt{coxph} function in the \texttt{survival} R package. 
    \item `RSF': random survival forests fitted using the \texttt{rfsrc} function in the \texttt{randomForestSRC} R package, with \texttt{ntree = 1000} and \texttt{nodesize = 15}. 
\end{itemize}
For a given landmarking time $\tau$, the above models are fitted with individuals that are at risk at time $\tau$, i.e., those with $X>\tau$. The covariates included in the proportional hazards model and random survival forests are different for different conformal methods: for HAPS and its two extensions, $Z_\tau$ is included; whereas for baseline methods, only baseline covariates $Z_0$ are included.


\paragraph{Neural-network prediction model.}
As an additional prediction model, we implemented a single-event neural-network dynamic prediction model inspired by Dynamic-DeepHit \citep{lee2019dynamic}, denoted by `DDH'. At each landmark time $\tau$, the model takes as input the observed covariate history up to $\tau$ for subjects still at risk. A gated recurrent unit (GRU) is used to encode the history into a hidden representation, and is then used to output the estimated discrete hazards.

The residual time from $\tau$ is discretized using quantiles of the observed residual follow-up times in the training set, with at most 40 time bins. The GRU representation is then used to predict discrete hazards in each bin, which is combined to output the survival curve estimate. 
The network is trained by minimizing the discrete-time right-censored survival negative log-likelihood. Unlike the original Dynamic-DeepHit implementation, we do not include the ranking loss or competing-risk architecture, since our setting involves a single event type and the model is used to estimate conditional quantiles for conformal prediction.

For the neural network structure and tuning parameters, we used a one-layer GRU with hidden size 64, dropout 0.1, learning rate $10^{-3}$, weight decay $10^{-5}$, batch size 64, 40 residual-time bins, and 50 training epochs. Training was performed on CPU. In degenerate landmark splits with fewer than 8 at-risk training subjects or no observed events, the implementation falls back to a Kaplan-Meier estimate.

\subsubsection{Nuisance models}

For the censoring survival function $G$, we consider the following working models:
\begin{itemize}
    \item Cox: a proportional hazards model implemented using the \texttt{coxph} function in the \texttt{survival} R package.
    \item RSF: a random survival forest implemented using the \texttt{rfsrc} function in the \texttt{randomForestSRC} R package, with \texttt{ntree = 1000},~ \texttt{nodesize = 30}, and \texttt{nsplit = 10}.
\end{itemize}

For the event-survival nuisance functions $S_k$, we consider the following working models:
\begin{itemize}
    \item Cox: a proportional hazards model implemented using the \texttt{coxph} function in the \texttt{survival} R package.
    \item RSF: a random survival forest implemented using \texttt{rfsrc} in the \texttt{randomForestSRC} R package, with \texttt{ntree = 1000}, ~\texttt{nodesize = 30}, and \texttt{nsplit = 10}.
    \item XGB: a gradient-boosted Cox model implemented using the \texttt{xgboost} R package with objective \texttt{survival:cox}, ~ \texttt{nrounds = 200}, ~ \texttt{eta = 0.05}, ~\texttt{max\_depth = 4}, ~\texttt{min\_child\_weight = 1}, ~\texttt{subsample = 0.8}, ~\texttt{colsample\_bytree = 0.8}, ~\texttt{lambda = 1}, and \texttt{gamma = 0}.
\end{itemize}

For the augmentation nuisance functions $\xi_k$, we use the following working model:
\begin{itemize}
    \item XGB: a gradient-boosted multiclass classifier implemented using the \texttt{xgboost} R package with objective \texttt{multi:softprob}. The three classes correspond to whether the pseudo-outcome is zero, negative, or positive. We use \texttt{num\_class = 3}, ~ \texttt{eval\_metric = mlogloss}, ~\texttt{nrounds = 200}, ~\texttt{eta = 0.05}, ~\texttt{max\_depth = 4}, \texttt{min\_child\_weight = 1}, ~\texttt{subsample = 0.8}, ~\texttt{colsample\_bytree = 0.8}, ~\texttt{lambda = 1}, and \texttt{gamma = 0}. Then the conditional expectation $\xi_k$ is computed with the estimated probabilities for the three classes.
\end{itemize}

For HAPS and HAPS-DR, the censoring survival function $G$ is estimated using the global approach described in Section~\ref{app:G_est} when the model for $G$ is Cox. When the model for $G$ is RSF, we use the piecewise estimation approach described in Section~\ref{app:G_est_piecewise}. For HAPS-A, we use the piecewise estimation approach in Section~\ref{app:G_est_piecewise} for estimating $G$ under both Cox and RSF models.

\subsection{Additional details and simulation results for the two simulation setups in the main paper}

\paragraph{Simulation setups used in the main paper.}
The simulation figures in the main paper report two representative settings, summarized in Table~\ref{tab:main_sim_estimation_models}. 

In Setup A, data are generated from DGM1, and a correctly specified Cox model is used to estimate $G$. This represents a setting in which a semiparametric model is used to estimate the censoring distribution and the model is correctly specified.

In Setup B, data are generated from DGM2, where the conditional censoring distribution is more complex, and flexible machine learning methods are used to estimate the censoring distribution.

\begin{table}[!ht]
\centering
\caption{DGMs and estimation models used in the two main simulation settings.}
\label{tab:main_sim_estimation_models}
{\fontsize{9}{10}\selectfont
\begin{tabular}{lcccccc}
\toprule
Setups & DGM & Prediction model & $G$ & $S_k$ & $\xi_k$ & Baseline methods \\
\midrule
Setup A & DGM1 & Cox & Cox & Cox & XGB & Cox prediction, Cox model for $G$ \\
Setup B & DGM2 & DDH & RSF & XGB & XGB & RSF prediction, RSF model for $G$ \\
\bottomrule
\end{tabular}
}
\end{table}

\paragraph{Additional results on upper and lower endpoints.}
Figure~\ref{fig:simu_setupA_4panel} and Figure~\ref{fig:simu_setupB_4panel} summarize the 90\% prediction intervals in simulation Setups A and B, including the lower and upper endpoints of the intervals. We observe that, for HAPS and its extensions, the lower endpoints tend to increase and the upper endpoints tend to decrease at later prediction times. This suggests that the prediction intervals become more informative over time as additional covariate history is incorporated.

\begin{figure}[ht!]
    \centering

    \begin{subfigure}{0.49\linewidth}
        \centering
        \includegraphics[width=\linewidth]{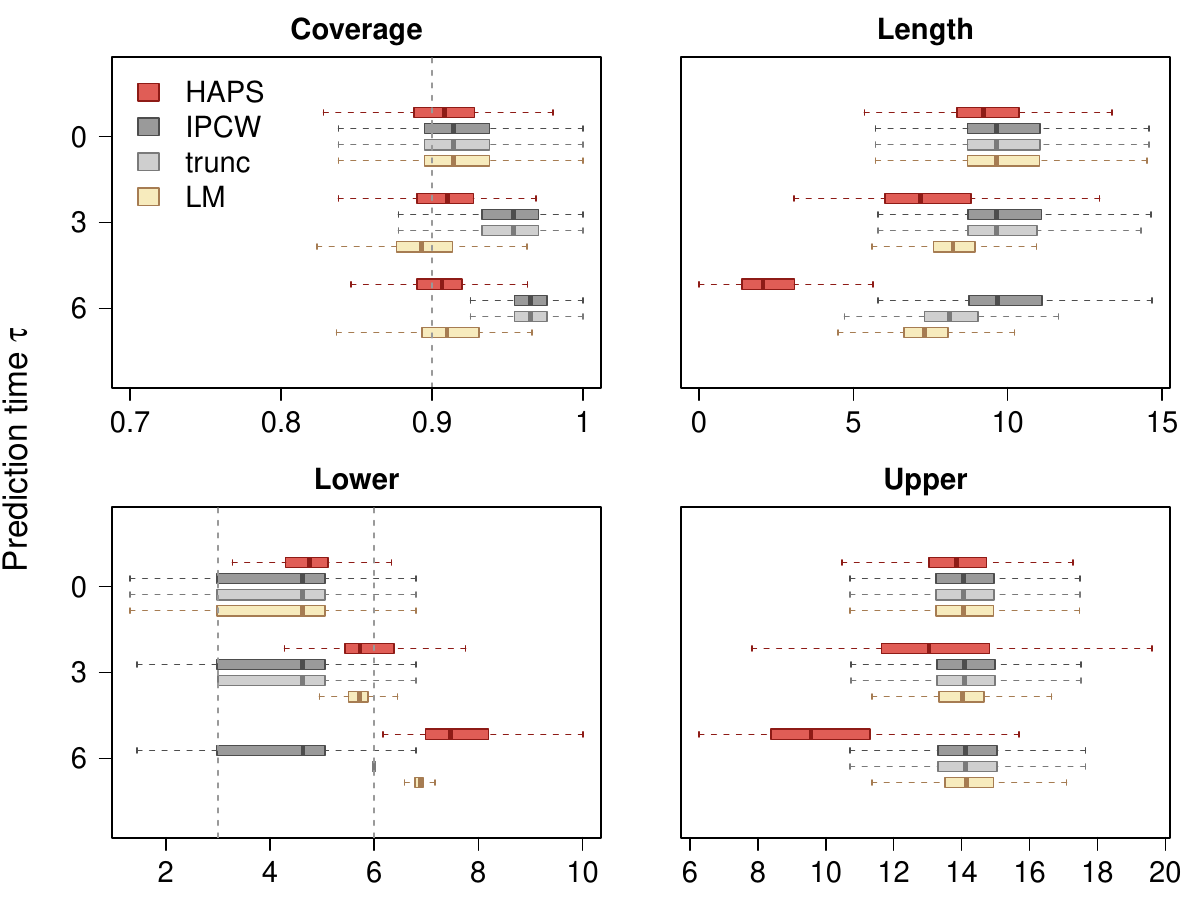}
        \caption{HAPS versus baseline methods.}
    \end{subfigure}
    \hfill
    \begin{subfigure}{0.49\linewidth}
        \centering
        \includegraphics[width=\linewidth]{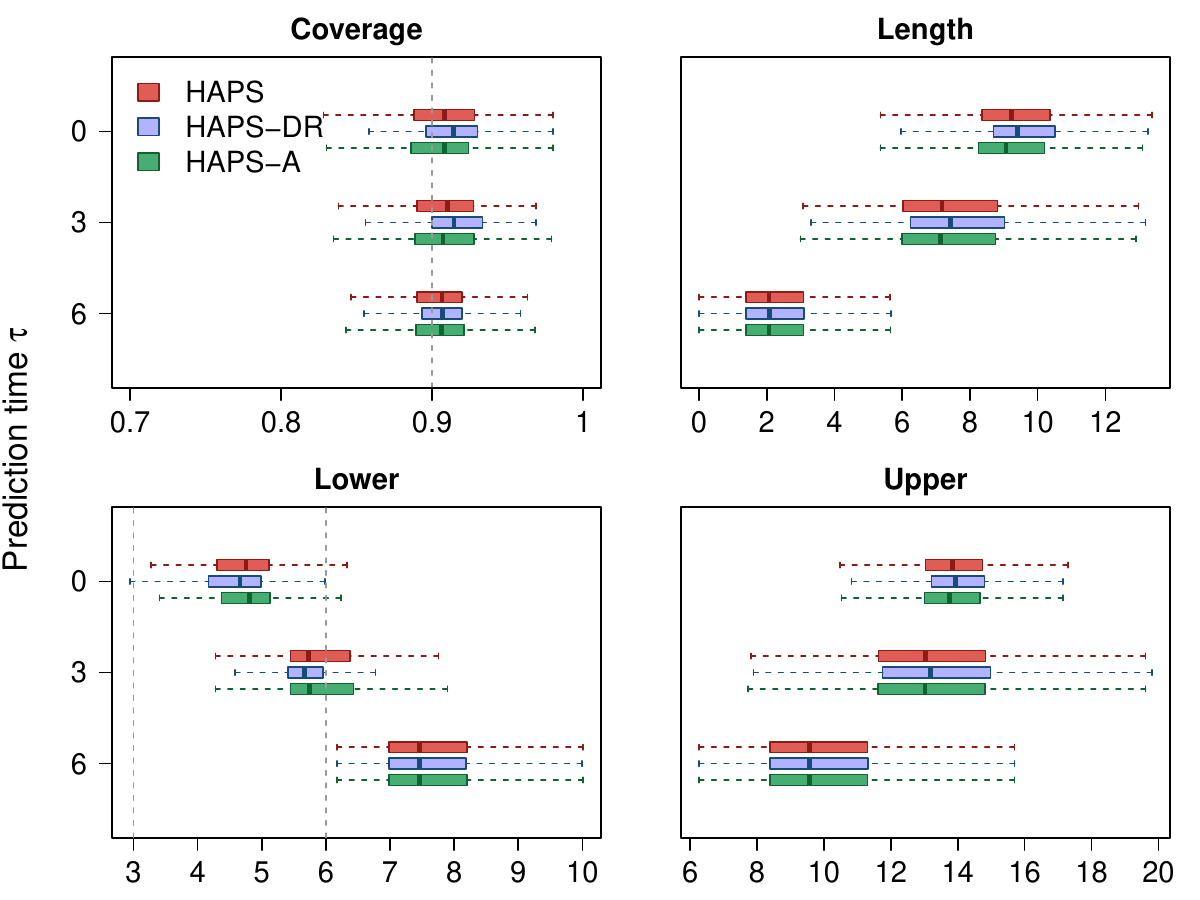}
        \caption{HAPS and its two extensions.}
    
    \end{subfigure}


    \caption{Summaries for the 90\% prediction intervals in Setup A of the main paper.}
    \label{fig:simu_setupA_4panel}
\end{figure}

\begin{figure}[h!]
    \centering

    \begin{subfigure}{0.49\linewidth}
        \centering
        \includegraphics[width=\linewidth]{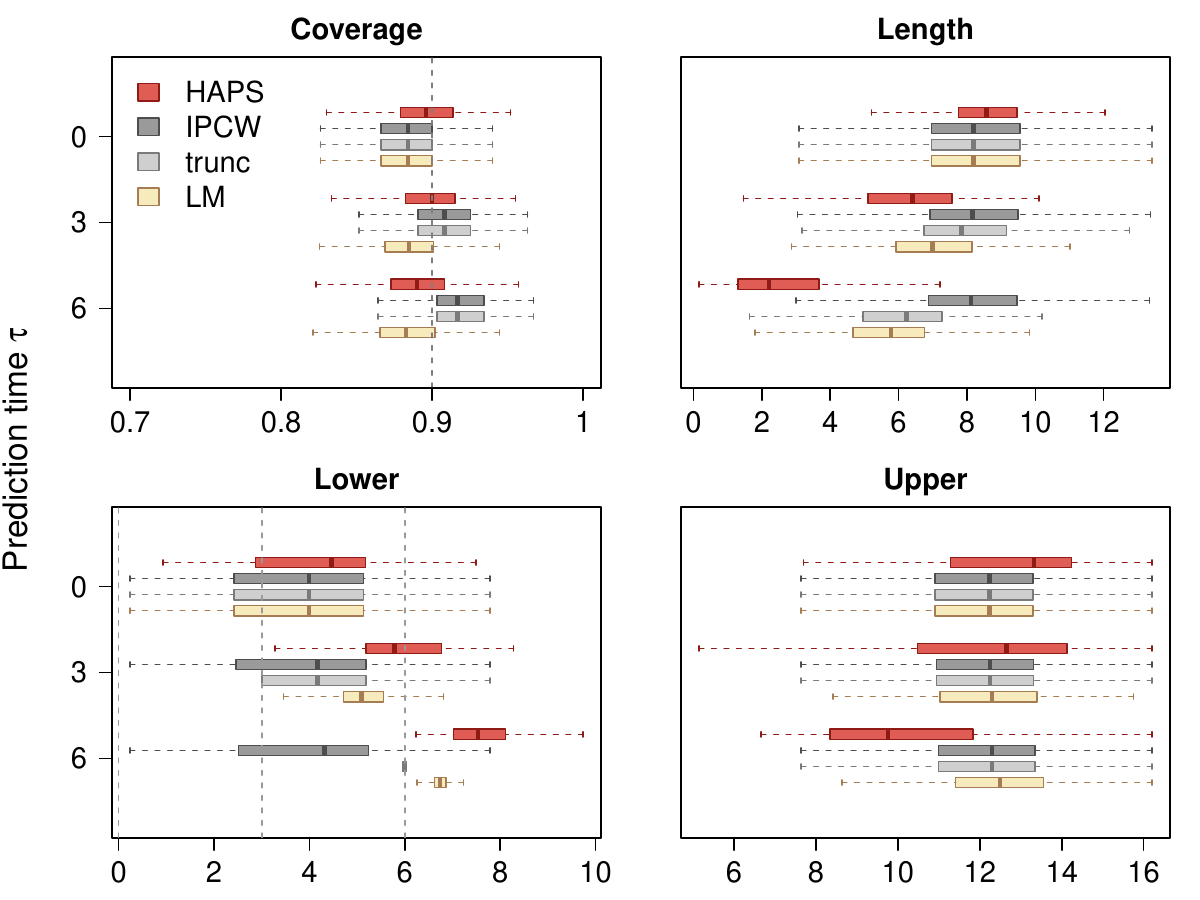}
        \caption{HAPS versus baseline methods.}
    \end{subfigure}
    \hfill
    \begin{subfigure}{0.49\linewidth}
        \centering
        \includegraphics[width=\linewidth]{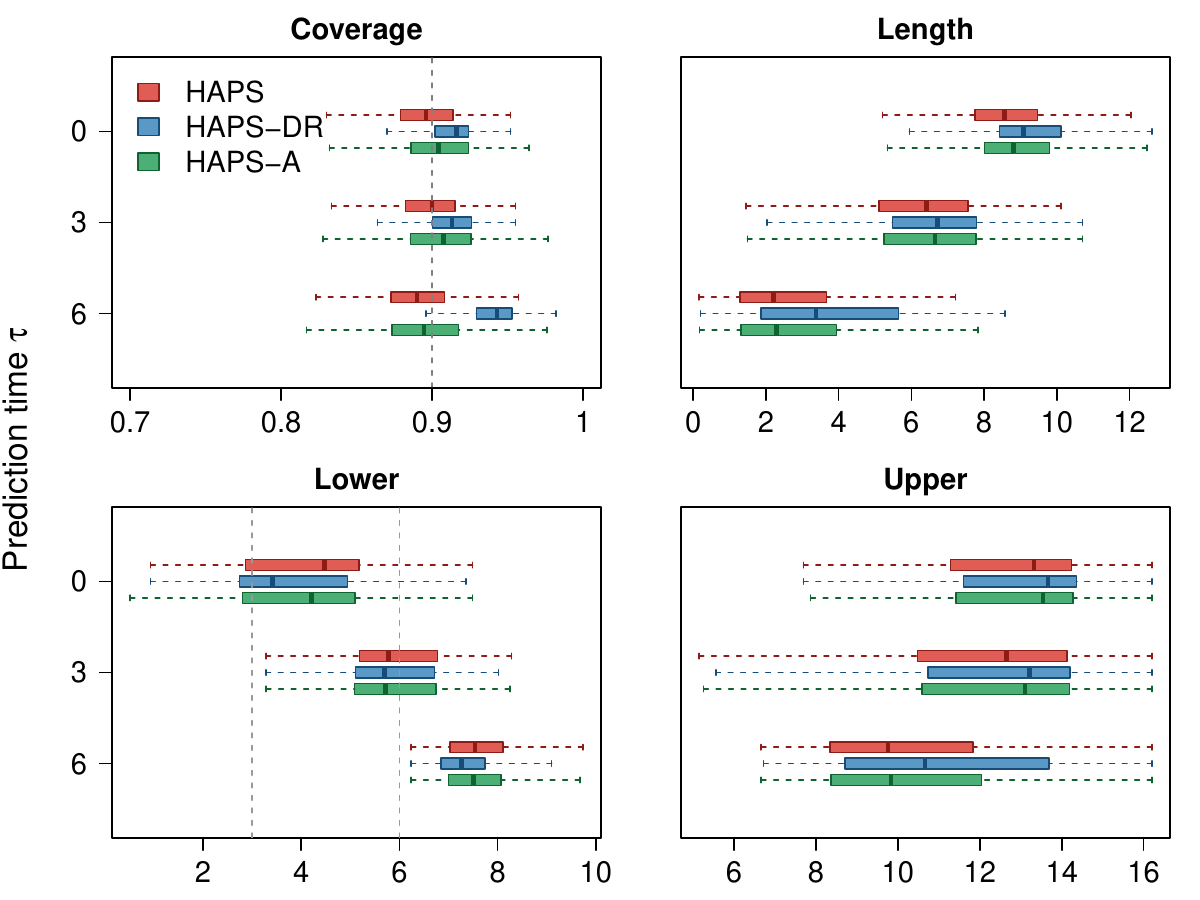}
        \caption{HAPS and its two extensions.}
    
    \end{subfigure}
    
    \caption{Summaries for the 90\% prediction intervals in Setup B of the main paper.}
    \label{fig:simu_setupB_4panel}
\end{figure}

\newpage
\subsection{Assessing robustness to censoring estimation errors} \label{app:HAPS_robustness}

Additional simulation studies are performed to investigate the robustness of HAPS and its extensions (HAPS-DR, HAPS-A) to censoring estimation errors. 

Appendix \ref{sec:simu_robustness_cen_misspecification} studies the performance of HAPS and its extensions when misspecified semiparametric models are used to estimate $G$, and Appendix \ref{sec:simu_robustness_cen_nonparametric}  studies the performance of HAPS and its extensions when flexible machine learning methods are used to estimate the censoring distribution.

\subsubsection{Under censoring model misspecification}\label{sec:simu_robustness_cen_misspecification}

Figure~\ref{fig:simu_cen_misspcification} summarizes the prediction intervals from HAPS and its two extensions (HAPS-DR, HAPS-A) under DGM2 and DGM3 when Cox model is used to estimate the censoring distribution. In both DGMs, censoring follows a mixture distribution, so a Cox model is misspecified for the conditional censoring distribution.

The results show that HAPS noticeably undercovers in DGM3, suggesting that it can be sensitive to censoring model misspecification when a parametric or semiparametric censoring model is used. In contrast, the two extensions, HAPS-A and HAPS-DR, improve coverage, indicating  robustness to censoring model misspecification. We also observe that HAPS-DR tends to overcover and produces wider intervals than HAPS and HAPS-A, which is expected from its post-processing construction.

\begin{figure}[ht!]
    \centering
    \begin{subfigure}{0.49\linewidth}
        \centering
    \includegraphics[width=\linewidth]{figures/mixC2_DGM1/h_HAPS_versions_ddh_rsf_Sxgb_cox_Xixgb_multiclass_tunewide64_epochs50_rho0.3_R200_n1000_ntest500_alpha0.1_loc0_Gtauone_d0.pdf}
    \caption{DGM2}
    \end{subfigure}
    \hfill
    \begin{subfigure}{0.49\linewidth}
        \centering
    \includegraphics[width=\linewidth]{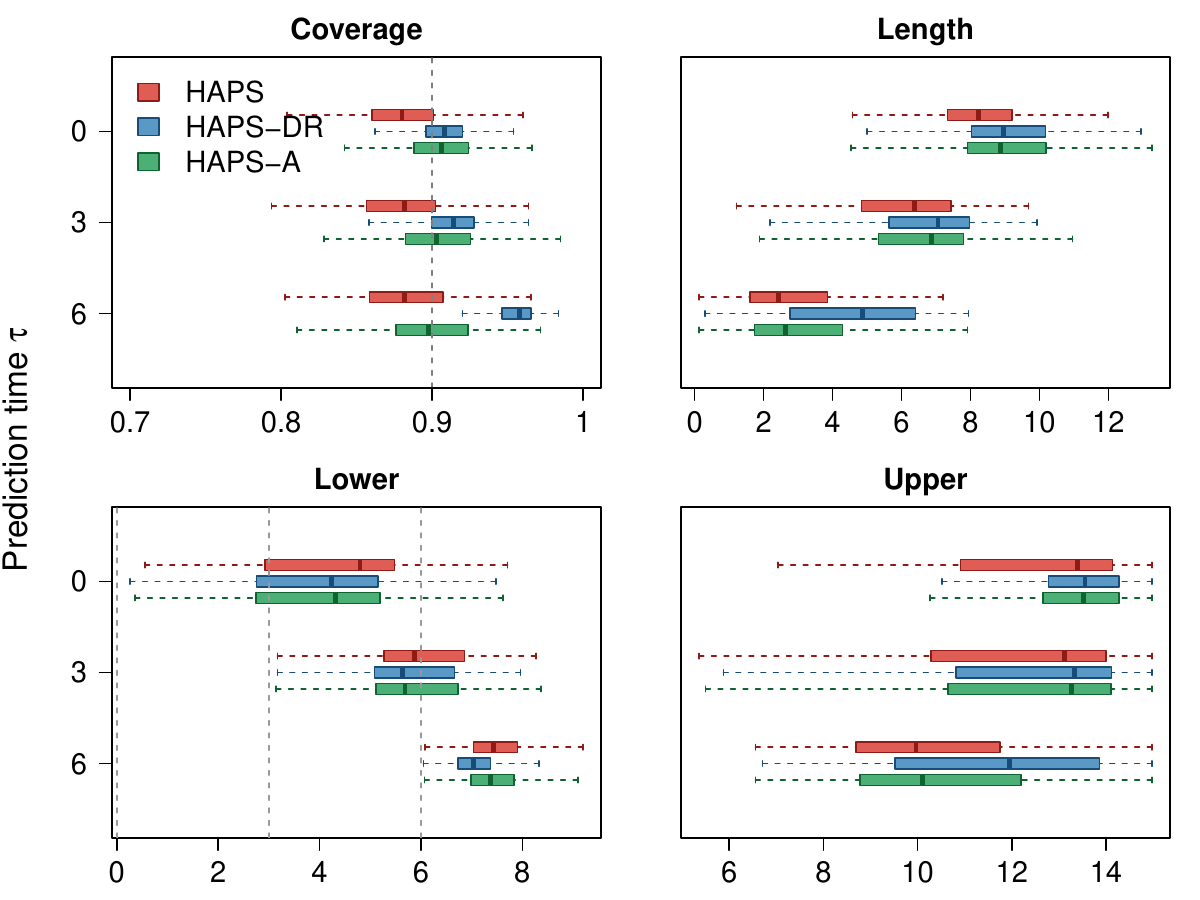}
    \caption{DGM3}
    \end{subfigure}
    
    \caption{Summary of 90\% prediction intervals with DDH prediction model and a misspecified Cox model for censoring. HAPS uses DDH for the prediction model, Cox for the censoring model; HAPS-A uses XGB to estimate additional nuisance estimators $S_k$ and $\xi_k$ in the augmentation term.}
    \label{fig:simu_cen_misspcification}
\end{figure}

\newpage
\subsubsection{Robustness when using machine learning methods to estimate nuisance parameters}\label{sec:simu_robustness_cen_nonparametric} 

Figure \ref{fig:simu_G_nonparametric_versus_n} summarizes the 90\% prediction intervals from HAPS and its two extensions (HAPS-DR, HAPS-A) that use RSF for the prediction model, RSF for the censoring model, and XGB for the additional augmentation nuisance parameters $S_k$ and $\xi_k$ for HAPS-A, under DGM1-DGM3 and various sample sizes. 

We observe that all three methods achieve close to nominal coverage in large sample sizes, reflecting the robustness of HAPS and its extensions when using flexible machine learning methods to estimate the nuisance parameters.

\begin{figure}[ht!]
    \centering
    \includegraphics[width=0.8\linewidth]{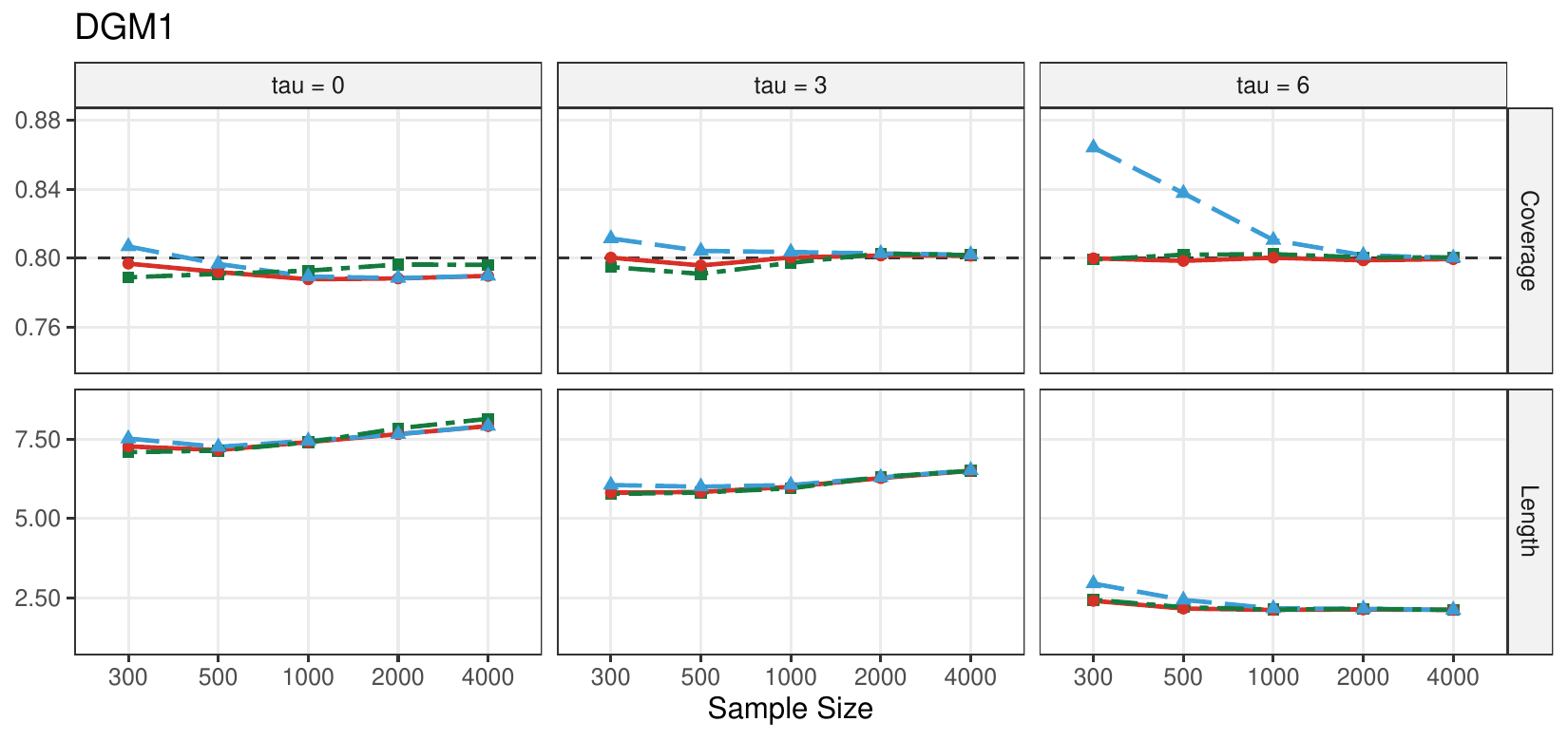}
    \includegraphics[width=0.8\linewidth]{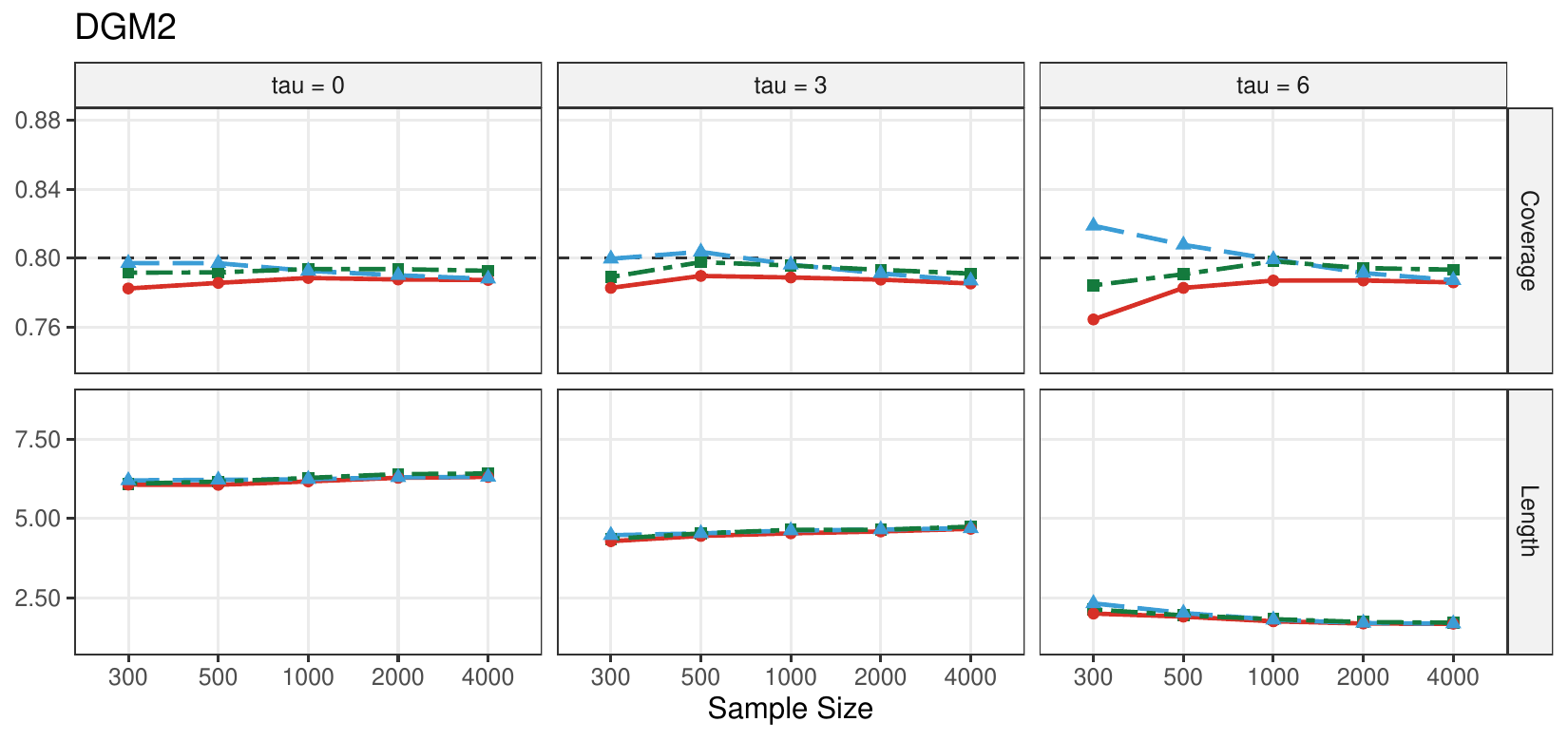}
    \includegraphics[width=0.8\linewidth]{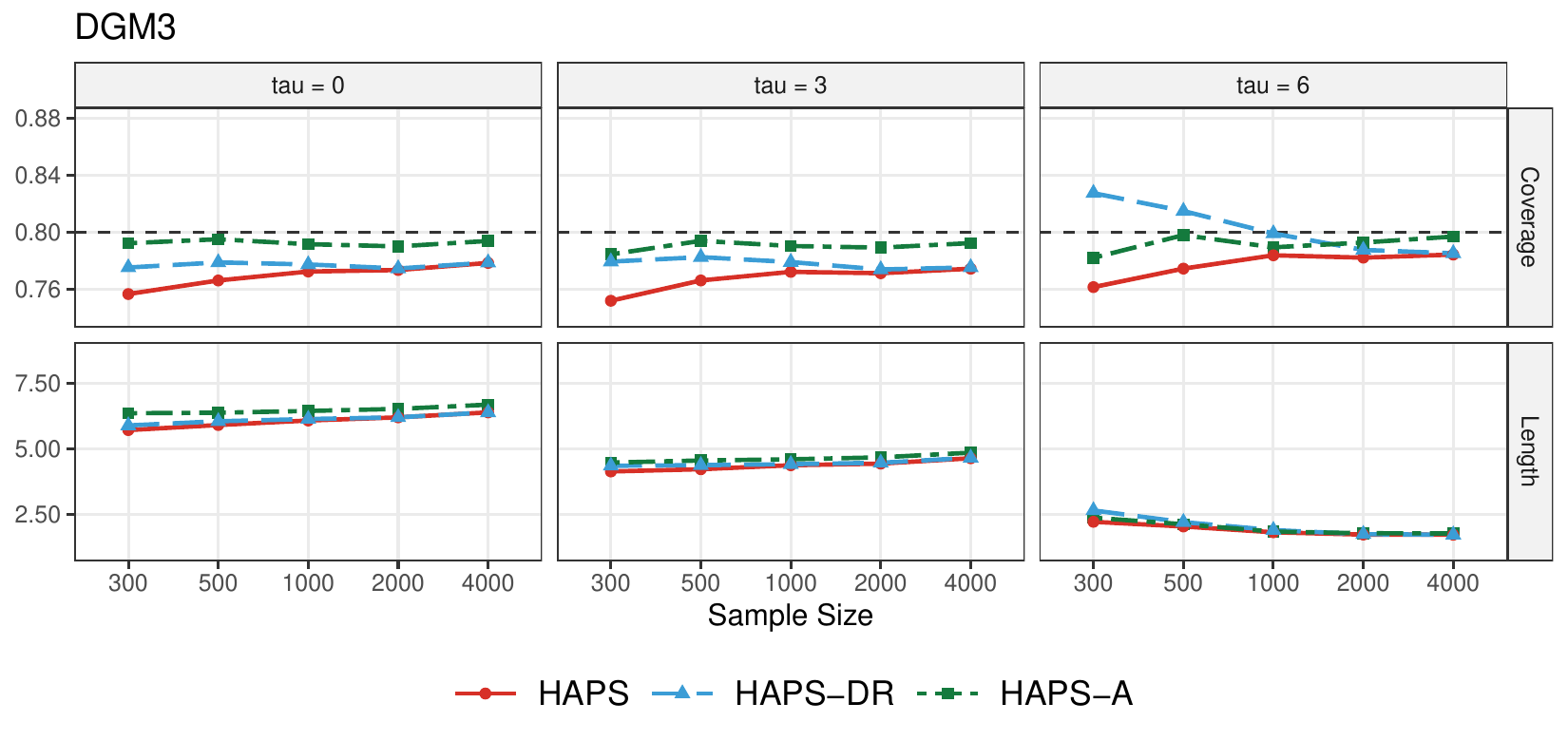}
    \caption{Mean coverage and median interval length of 80\% prediction intervals from HAPS, HAPS-DR, and HAPS-A across sample sizes for DGM1--DGM3, across 200 Monte Carlo replications. 
    The dashed horizontal line marks the nominal 80\% coverage level.}
    \label{fig:simu_G_nonparametric_versus_n}
\end{figure}

\clearpage
\subsection{Impact of prediction model accuracy on prediction intervals}\label{app:simu_impact_pred_accuracy} 

In this section, we study the impact of prediction model accuracy on prediction intervals from HAPS and its two extensions, HAPS-DR and HAPS-A. Specifically, we compare the prediction intervals obtained using the DDH prediction model described in Appendix~\ref{app:pred-models} with those obtained using a smaller DDH model with hidden size 32 and 20 training epochs.

Figure~\ref{fig:simu_impact_pred} summarizes the 90\% prediction intervals under DGM2, where an RSF model is used to estimate the censoring distribution and XGB models are used to estimate the additional augmentation nuisance parameters $S_k$ and $\xi_k$.

We observe that in both settings, the prediction intervals achieve close to nominal or above nominal coverage. However, when the smaller DDH model is used, the prediction intervals tend to be wider, especially at $\tau = 3$ and $\tau = 6$, reflecting the poorer prediction performance of this model. We also observe that, compared with HAPS and HAPS-A, the overcoverage of HAPS-DR becomes more pronounced when the smaller DDH model is used; and in this case, the prediction intervals from HAPS-DR are substantially wider than those from HAPS and HAPS-A.

\begin{figure}[h!]
    \centering

    \begin{subfigure}{1\linewidth}
        \centering
        \includegraphics[width=0.55\linewidth]{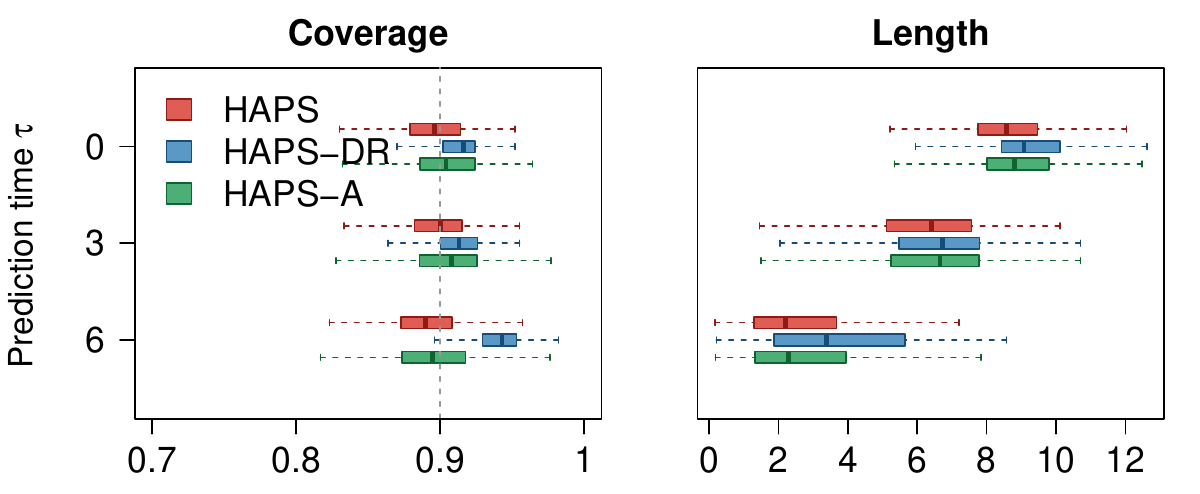}
        \caption{DDH model described in Section \ref{app:pred-models}.
        with hidden size 64 and 50 training epochs.
        }
    \end{subfigure}

    \vspace{1em}
    \begin{subfigure}{1\linewidth}
        \centering
        \includegraphics[width=0.55\linewidth]{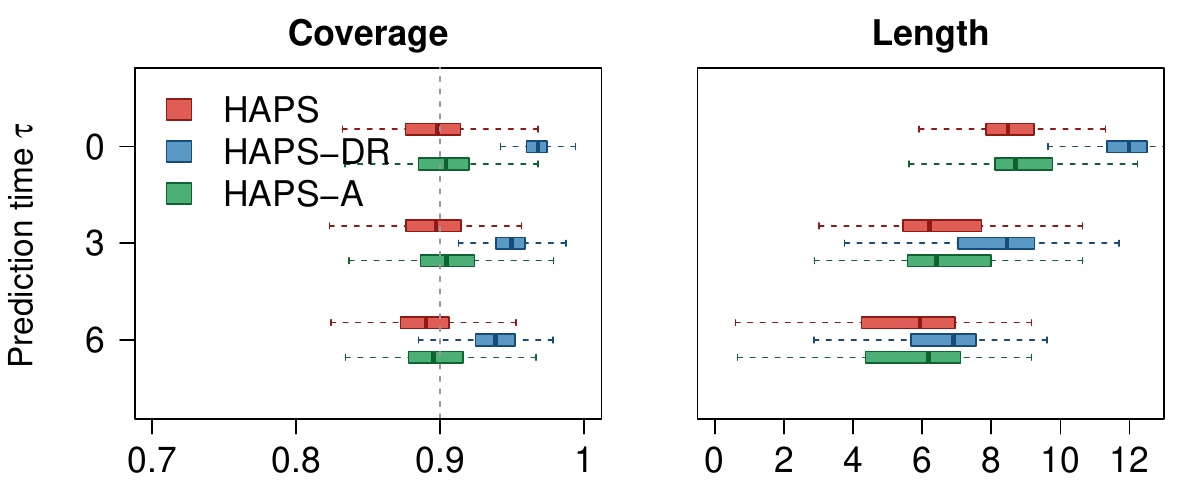}
        \caption{Smaller DDH model with hidden size 32 and  20 training epochs.}
    \end{subfigure}

    \caption{Summary of 90\% prediction intervals for HAPS and its two extensions in DGM2 under two DDH prediction model tuning choices. All nuisance models are held fixed: RSF is used to estimate the censoring distribution, and XGB models are used to estimate additional augmentation nuisance parameters $S_k$ and $\xi_k$.}
    \label{fig:simu_impact_pred}
\end{figure}

\clearpage
\section{Additional details and results for application}\label{app:application}

\subsection{Details of the data sets}\label{app:dataset_details}

\paragraph{Colon cancer data.}
We analyze data from one of the first successful trials of adjuvant chemotherapy for colon cancer \citep{laurie1989surgical, moertel1990levamisole}, available as the \texttt{colon} data set in the \texttt{survival} R package and consisting of 929 colon cancer patients. We focus on predicting overall survival, i.e., time to death. The censoring rate is around 51.3\%.
We include the following baseline covariates: age (years), sex (male/female), treatment group (observation, Levamisole, or Levamisole+5-FU), and number of positive lymph nodes (more than 4 vs.\ not); as well as a time-varying covariate for recurrence, recorded as a 0–1 process indicating whether recurrence has occurred by each time.

\paragraph{PBC data.}
We analyze data from the Mayo Clinic trial in primary biliary cholangitis (PBC) conducted between 1974 and 1984, available as the \texttt{pbcseq} data set in \texttt{survival} R package and consisting of 312 patients with longitudinal laboratory measurements in addition to baseline covariates. Again we focus on predicting overall survival. The censoring rate is approximately 55.1\%.
Following \citet{murtaugh1994primary}, we consider the following (transformed) covariates: age, $\log$(bilirubin), $\log$(prothrombin time), $\log$(albumin), and edema score (0/0.5/1). Among these, bilirubin, prothrombin time, albumin, and edema score are time-varying.

\subsection{IPCW estimator for the survivor-conditional coverage} \label{app:justification_IPCW_coverage}

As mentioned before, the event times in real data may be censored. As a result, the empirical survivor-conditional coverage cannot be computed directly on test data. We therefore consider its IPCW estimator in \eqref{eq:beta_hat} below. Such an estimator for coverage probability has also been reported in \citet{farina2025doubly} in static conformal settings. 
It is a consistent estimator of the coverage probability when the censoring model is correctly specified. 
The details of the estimator are provided below.

For a given prediction set $\Cc(\bar Z_\tau)$, Let $\beta = \PP\{T\in\Cc(\bar Z_\tau)\mid T>\tau\}$ denote the coverage probability among survivors at time $\tau$. 
Recall from Section \ref{sec:method_C} the definition of $G$. 
With an estimator of $G$, denoted by $\hat G$, and a test sample indexed by $\Ic_{\text{test}}$, the IPCW estimator for $\beta$ is 
\begin{align}
    \hat\beta = \frac{\sum_{i\in \Ic_{\text{test}}, X_i>\tau} \ind\{X_i\in \hat\Cc(\bar Z_\tau)\}\cdot \Delta_i / \hat G(X_i|\bar Z_{X_i, i})}{\sum_{i\in \Ic_{\text{test}}, X_i>\tau} \Delta_i / \hat G(X_i|\bar Z_{X_i, i})}. \label{eq:beta_hat}
\end{align}
The justification of the estimator is provided below.

\paragraph{Justification.}
Let 
\begin{align*}
    D^*(\beta) = \ind(T>\tau)\left[  \ind\{T\in\Cc(\bar Z_\tau)\} - \beta \right]
\end{align*}
By definition, we have that $D^*(\beta)$ is an estimating function for $\beta$ in the sense that $\E\{D^*(\beta)\} = 0$. 
Under right censoring, $T$ and $\bar Z_\tau$ are not always completely observed. So we apply IPCW \citep{robins1992recovery, rotnitzky2005inverse}, resulting in the following estimating function for $\beta$ in the observed data:
\begin{align}
    D(\beta;G) = \frac{\Delta}{G(X|\bar Z_X)} \cdot \ind(X>\tau)\left[  \ind\{X\in\Cc(\bar Z_\tau)\} - \beta \right]. \label{eq:U(beta)}
\end{align}
Therefore, with a random test sample indexed by $\Ic_{\text{test}}$, if we have an estimator of $G$, denoted by $\hat G$, then $\beta$ can be estimated by solving the following estimating equation:
\begin{align}
    \sum_{i\in\Ic_{\text{test}}} D_i(\beta;\hat G) = 0. \label{eq:EE_U(beta)}
\end{align}
By plugging in the expression \eqref{eq:U(beta)} into \eqref{eq:EE_U(beta)} and solve for $\beta$, we get a closed-form solution 
\begin{align*}
    \hat\beta 
    & = \frac{\sum_{i\in\Ic_{\text{test}}}  \ind(X_i>\tau) \ind\{X_i\in\Cc(\bar Z_{\tau,i})\} \cdot \Delta_i/\hat G(X_i|\bar Z_{X_i,i})}{\sum_{i\in\Ic_{\text{test}}}  \ind(X_i>\tau)\cdot \Delta_i/\hat G(X_i|\bar Z_{X_i,i})} \\
    & = \frac{\sum_{i\in\Ic_{\text{test}}, X_i>\tau}  \ind\{X_i\in\Cc(\bar Z_{\tau,i})\} \cdot \Delta_i/\hat G(X_i|\bar Z_{X_i,i})}{\sum_{i\in\Ic_{\text{test}}, X_i>\tau}  \Delta_i/\hat G(X_i|\bar Z_{X_i,i})}. 
\end{align*}

\paragraph{Remark.}
Evaluating the coverage of prediction intervals with right censored data is challenging. 
The evaluation strategy we use here is based on IPCW and is valid only when the censoring model is correctly specified.

In addition to IPCW, \citet{farina2025doubly} also considered a regression-based estimator, which requires correct modeling of the conditional event-time distribution given covariates, and an augmented IPCW (AIPCW) estimator, which is consistent if at least one of the censoring model or the event-time model is correctly specified.
Extending the regression-based or AIPCW-based evaluation strategies to survivor-conditional coverage in our setting with time-varying covariates is challenging because they require modeling not only the conditional event time distribution but also the covariate process.

Alternatively, \citet{sesia2025doubly} reported lower and upper bounds on the true miscoverage rate and used their average as a point estimate. However, this midpoint estimator is a heuristic, without a theoretical guarantee for convergence to the true miscoverage rate.

\subsection{Additional results for colon cancer data application}\label{app:application_colon}

\begin{figure}[!ht]
    \centering
    \includegraphics[width=0.65\linewidth]{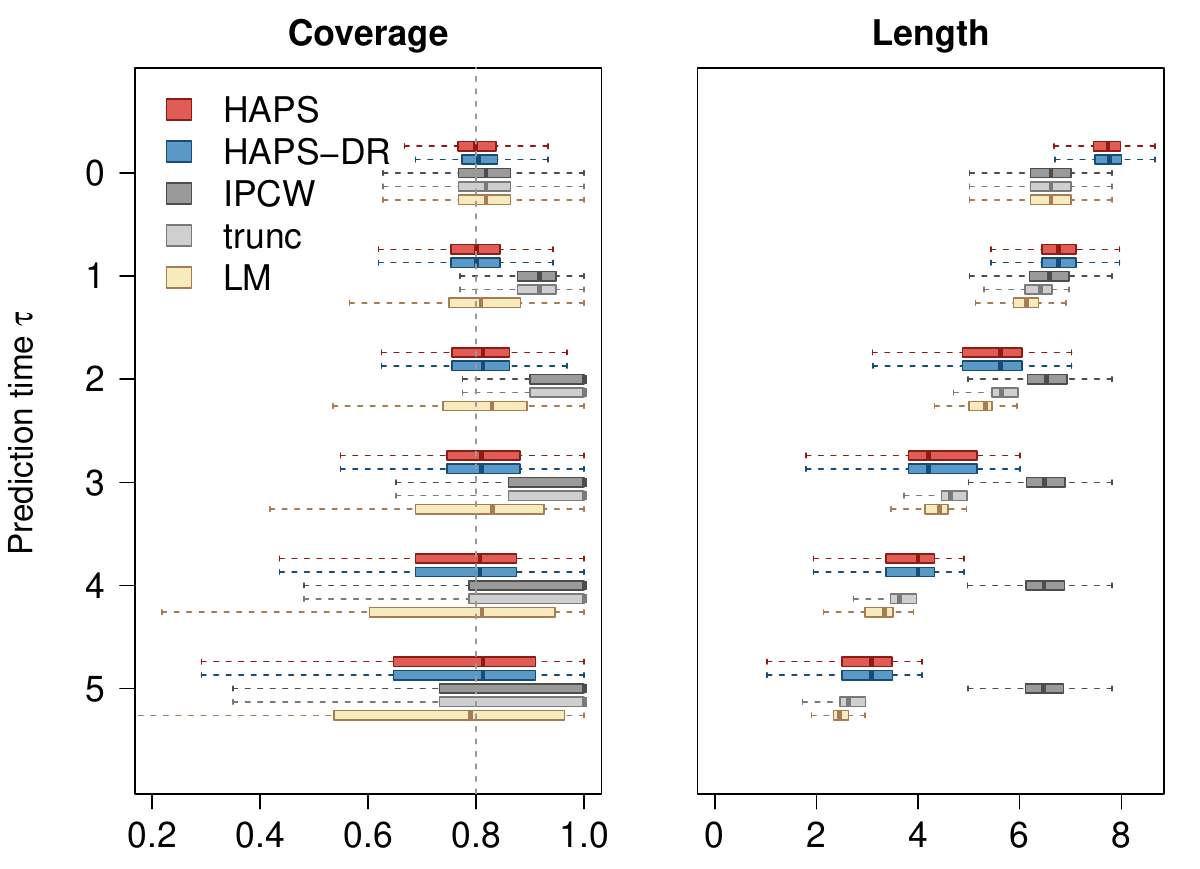}
    \caption{Summary of 80\% prediction intervals across 200 random splits for the colon cancer data.}
\end{figure}

\subsection{Additional results for PBC data application}\label{app:application_pbc}

\begin{figure}[!ht]
    \centering
    \includegraphics[width=0.65\linewidth]{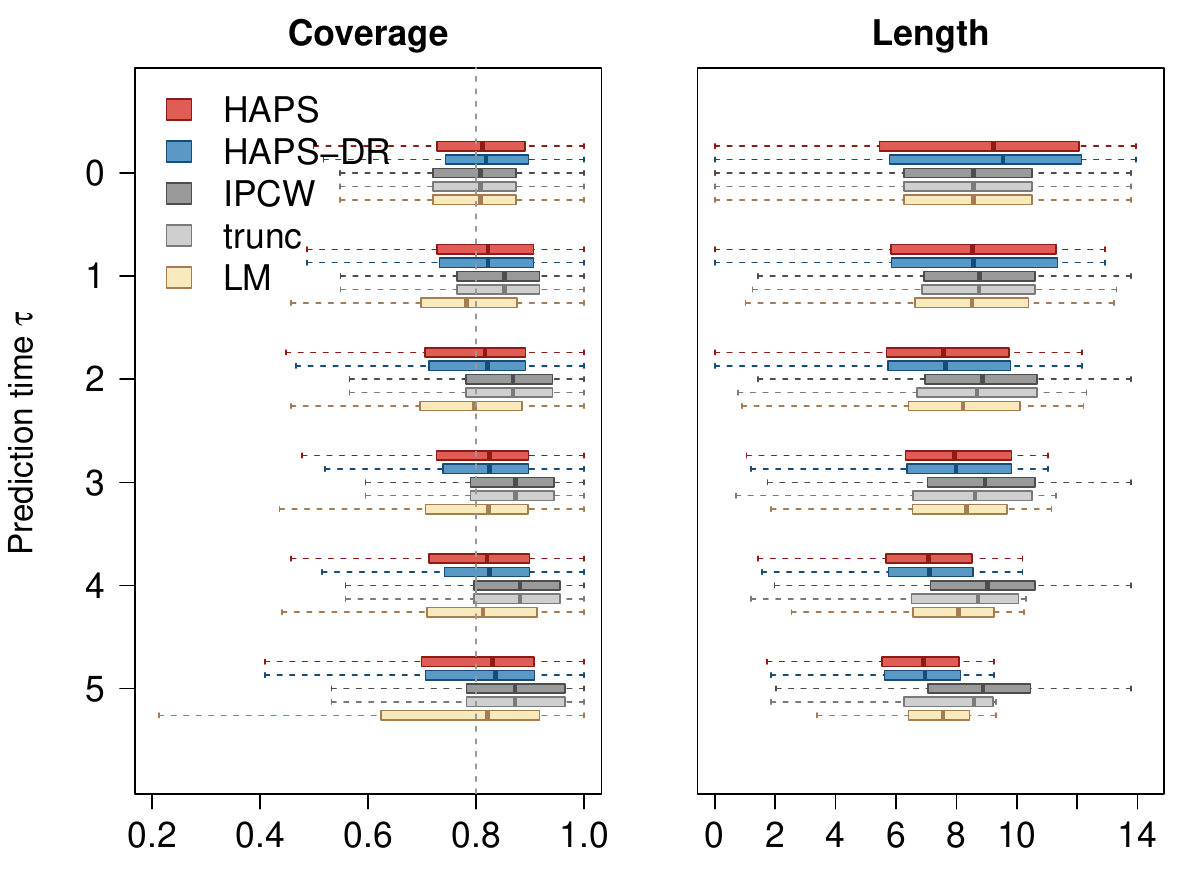}
    \caption{Summary of 80\% prediction intervals across 200 random splits for the PBC data.}
\end{figure}

\subsubsection{Ratio of prediction interval lengths from HAPS and HAPS-DR relative to baseline methods}

\begin{table}[!ht]
\centering
\caption{Ratio of median interval length for HAPS and HAPS-DR relative to baseline methods in the two data applications.}
\label{tab:app_length_ratio}
{\fontsize{8}{9}\selectfont
\begin{tabular}{llccccccccc}
\toprule
&  & \multicolumn{3}{c}{$\tau=3$} & \multicolumn{3}{c}{$\tau=5$} \\
\cmidrule(lr){3-5} \cmidrule(lr){6-8} 
Data & Method  & IPCW & trunc & LM & IPCW & trunc & LM \\
\midrule
PBC & HAPS    & 0.90 & 0.94 & 0.98 & 0.78 & 0.87 & 0.91 \\
    & HAPS-DR & 0.90 & 0.94 & 0.98 & 0.78 & 0.87 & 0.91 \\
\midrule
Colon & HAPS    & 0.56 & 0.71 & 0.78 & 0.40 & 0.74 & 0.87 \\
      & HAPS-DR & 0.56 & 0.71 & 0.78 & 0.40 & 0.74 & 0.87 \\
\bottomrule
\end{tabular}
}
\end{table}


\section{Additional information}

\subsection{Computational resources for experiments}
\label{app:compute}

All experiments were run on CPUs; no GPU acceleration was used. Smaller-scale
experiments and real-data analyses were conducted on a MacBook Air, 13-inch,
2024, with an Apple M3 chip and 16 GB memory. For the main laptop-based
simulation experiments with $R=200$, $n=1000$, and $n_{\mathrm{test}}=500$,
the Cox-based DGM1 experiments took a few minutes per method. The DDH experiment
under DGM2 with RSF censoring estimation and XGB nuisance-parameter estimation
was the most computationally intensive local run, requiring approximately
4.7 hours in total, or about 84 seconds per Monte Carlo replication, while
constructing HAPS, HAPS-DR, HAPS-A, and the raw DDH intervals in the same run.
Analogous DDH runs with a Cox censoring model required approximately 1 hour.
The two real-data DDH analyses with $R=200$ required approximately 6--20 minutes,
depending on the dataset and censoring model.

The larger sample-size experiments were run on CPU through Open
Science Grid (OSG) \citep{osg07, osg09, https://doi.org/10.21231/906p-4d78, https://doi.org/10.21231/0kvz-ve57}
as parallel array jobs. For the appendix sample-size
experiments at nominal 80\% coverage, the median runtime per prediction time
was approximately 32--44 seconds for $n=300$, 55--58 seconds for $n=500$,
117--125 seconds for $n=1000$, 389--444 seconds for $n=2000$, and
1243--1548 seconds for $n=4000$. Equivalently, the median runtime for one
Monte Carlo replication across the three prediction times was approximately
2.2 minutes for $n=300$, 3 minutes for $n=500$, 6 minutes for $n=1000$,
20--22 minutes for $n=2000$, and 63--79 minutes for $n=4000$. 
These OSG jobs were run in parallel, so the serial runtimes should not be interpreted as the
wall-clock time required for the full experiment.

\subsection{Existing assets and licenses}\label{app:existing_assets}

The real-data applications use the publicly available \texttt{colon} and
\texttt{pbcseq} datasets distributed with the \texttt{survival} R package.
We cite the original studies associated with these datasets and use the datasets
in accordance with the \texttt{survival} package license.

Our implementation builds on existing open-source software. The R code uses
\texttt{survival}, \texttt{randomForestSRC}, \texttt{xgboost}, \texttt{dplyr},
\texttt{tidyr}, \texttt{ggplot2}, \texttt{scales}, and \texttt{reticulate}.
The neural-network prediction model is implemented in Python through
\texttt{reticulate} and uses \texttt{torch}, \texttt{numpy}, and \texttt{pandas}.
These packages are used under their corresponding open-source licenses. The
versions of all R and Python packages used in our experiments are listed in the
\texttt{README.md} file included with the supplementary code.

The implementation of the DDH prediction model was adapted from the publicly available Dynamic-DeepHit implementation by Lee, Yoon, and van der Schaar
(\url{https://github.com/chl8856/Dynamic-DeepHit}), associated with \citet{lee2019dynamic}. The repository did not specify an explicit software license at the time of writing.



\end{document}